\tikzset{
   n/.style= {circle,fill,inner sep=1.5pt,node distance=2cm}
  ,acc/.style={circle,draw,inner sep=3pt,node distance=2cm}
  ,phantom/.style={circle},
  ,arr/.style={->, >=stealth, semithick, shorten <= 3pt, shorten >= 3pt}
}
\newcounter{blubber}
\def\moverlay{\mathpalette\mov@rlay}
\def\mov@rlay#1#2{\leavevmode\vtop{%
   \baselineskip\z@skip \lineskiplimit-\maxdimen
   \ialign{\hfil$\m@th#1##$\hfil\cr#2\crcr}}}
\newcommand{\charfusion}[3][\mathord]{
    #1{\ifx#1\mathop\vphantom{#2}\fi
        \mathpalette\mov@rlay{#2\cr#3}
      }
    \ifx#1\mathop\expandafter\displaylimits\fi}
\newcommand\pfun{\mathrel{\ooalign{\hfil$\mapstochar\mkern5mu$\hfil\cr$\to$\cr}}}
\newcommand{\detcarrier}{D_\target}
\newcommand{\detprio}{\beta}
\newcommand{\FL}{\mathsf{FL}}
\newcommand{\target}{\chi}
\newcommand{\FLtarget}{\mathbf F}
\newcommand\ExpTime{$\textsc{ExpTime}$\xspace}
\newcommand{\hearts}{\heartsuit}
\newcommand{\sem}[1]{[\![#1]\!]}
\newcommand{\psem}[1]{\widehat{[\![#1]\!]}}
\newcommand{\Pow}{\mathcal{P}}
\spnewtheorem{thm}[theorem]{Theorem}{\bfseries}{\itshape}
\spnewtheorem{cor}[theorem]{Corollary}{\bfseries}{\itshape}
\spnewtheorem{cnj}[theorem]{Conjecture}{\bfseries}{\itshape}
\spnewtheorem{lem}[theorem]{Lemma}{\bfseries}{\itshape}
\spnewtheorem{lemdefn}[theorem]{Lemma and Definition}{\bfseries}{\itshape}
\spnewtheorem{prop}[theorem]{Proposition}{\bfseries}{\itshape}
\spnewtheorem{defn}[theorem]{Definition}{\bfseries}{\upshape}
\spnewtheorem{rem}[theorem]{Remark}{\bfseries}{\upshape}
\spnewtheorem{notation}[theorem]{Notation}{\bfseries}{\upshape}
\spnewtheorem{expl}[theorem]{Example}{\bfseries}{\upshape}
\spnewtheorem{thmdefn}[theorem]{Theorem and Definition}{\bfseries}{\itshape}
\spnewtheorem{propdefn}[theorem]{Proposition and Definition}{\bfseries}{\itshape}
\spnewtheorem{assumption}[theorem]{Assumption}{\bfseries}{\upshape}
\spnewtheorem{algorithm}[theorem]{Algorithm}{\bfseries}{\upshape}
 \renewenvironment{theorem}{\begin{thm}}{\end{thm}}
 \renewenvironment{corollary}{\begin{cor}}{\end{cor}}
 \renewenvironment{lemma}{\begin{lem}}{\end{lem}}
 \renewenvironment{definition}{\begin{defn}}{\end{defn}}
 \renewenvironment{remark}{\begin{rem}}{\end{rem}}
 \renewenvironment{example}{\begin{expl}}{\end{expl}}
\begin{document}

\bibliographystyle{myabbrv}

\title{Optimal Satisfiability Checking\\ for Arithmetic $\mu$-Calculi\protect}

\author{Daniel Hausmann \and Lutz Schr\"oder}
\institute{Friedrich-Alexander-Universit\"{a}t
Erlangen-N\"urnberg, Germany}

\maketitle

\begin{abstract} The coalgebraic $\mu$-calculus provides a generic semantic framework for fixpoint logics with branching types beyond the standard relational setup, e.g.\ probabilistic, weighted, or game-based. Previous work on the coalgebraic $\mu$-calculus includes an exponential time upper bound on satisfiability checking, which however requires a well-behaved set of tableau rules for the next-step modalities. Such rules are not available in all cases of interest, in particular ones involving either integer weights as in the graded $\mu$-calculus, or real-valued weights in combination with non-linear arithmetic. In the present paper, we prove the same upper complexity bound under more general assumptions, specifically regarding the complexity of the (much simpler) satisfiability problem for the underlying so-called one-step logic, roughly described as the nesting-free next-step fragment of the logic. We also present a generic global caching algorithm that is suitable for practical use and supports on-the-fly satisfiability checking. Example applications include new exponential-time upper bounds for satisfiability checking in an extension of the graded $\mu$-calculus with Presburger arithmetic, as well as an extension of the (two-valued) probabilistic $\mu$-calculus with polynomial inequalities. As a side result, we moreover obtain a new upper bound $\mathcal{O}(((nk)!)^2)$ on minimum model size for satisfiable formulas for \emph{all} coalgebraic $\mu$-calculi, where~$n$ is the size of the formula and~$k$ its alternation depth.

\end{abstract}

\section{Introduction}

Modal fixpoint logics are a well-established tool in the temporal
specification, verification, and analysis of concurrent systems. One
of the most expressive logics of this type is the modal
$\mu$-calculus~\cite{Kozen88,BradfieldStirling06,BradfieldWalukiewicz18},
which features explicit operators for taking least and greatest
fixpoints, which roughly speaking serve the specification of liveness
properties (least fixpoints) and safety properties (greatest
fixpoints), respectively. Like most modal logics, the modal
$\mu$-calculus is traditionally interpreted over relational models
such as Kripke frames or labelled transition systems. The growing
interest in more expressive models where transitions are governed,
e.g., by probabilities, weights, or games has sparked a commensurate
growth of temporal logics and fixpoint logics interpreted over such
systems; prominent examples include probabilistic
$\mu$-calculi~\cite{CleavelandEA05,HuthKwiatkowska97,LiuEA15}, the
alternating-time $\mu$-calculus~\cite{AlurEA02}, and the monotone
$\mu$-calculus, which contains Parikh's game
logic~\cite{Parikh85}. The graded $\mu$-calculus~\cite{KupfermanEA02}
features next-step modalities that count successors; it is standardly
interpreted over Kripke frames but, as pointed out by D'Agostino and
Visser~\cite{DAgostinoVisser02}, graded modalities are more naturally
interpreted over so-called multigraphs, where edges carry integer
weights, and in fact we shall see that this modification leads to
better bounds on minimum model size for satisfiable formulas.

Coalgebraic logic~\cite{Pattinson03,Schroder08} has emerged as a
unifying framework for modal logics interpreted over such more general
models; it is based on the principle of casting the transition type of
the systems at hand as a set functor, and the systems in question as
coalgebras for this type functor, following the paradigm of universal
coalgebra~\cite{Rutten00}; additionally, modalities are interpreted as
so-called \emph{predicate liftings}. The \emph{coalgebraic
  $\mu$-calculus}~\cite{CirsteaEA11a} caters for fixpoint logics
within this framework, and essentially covers all mentioned
(two-valued) examples as instances. It has been shown that
satisfiability checking in a coalgebraic $\mu$-calculus is in
\ExpTime, \emph{provided} that one exhibits a set of tableau rules for
the modalities, so-called \emph{one-step rules}, that is
\emph{tractable} in a suitable sense. Such rules are known for many
important cases, notably including alternating-time logics, the
probabilistic $\mu$-calculus even when extended with linear
inequalities, and game
logic~\cite{SchroderPattinson09,KupkePattinson10,CirsteaEA11a}. There
are, however, important cases where such rule sets are currently
missing, and where there is in fact little perspective for finding
suitable rules. One prominent case of this kind is graded modal logic;
further cases arise when logics over systems with non-negative real
weights, such as probabilistic systems, are taken beyond linear
arithmetic to include polynomial inequalities.

The object of the current paper is to fill this gap by proving a
generic upper bound \ExpTime for coalgebraic $\mu$-calculi in the
absence of tractable sets of modal tableau rules. The method we use
instead is to analyse the so-called \emph{one-step satisfiability}
problem of the logic on a semantic level -- this problem is
essentially the satisfiability problem of a very small fragment of the
logic, the \emph{one-step logic}, which excludes not only fixpoints,
but also nested next-step modalities, with a correspondingly
simplified semantics that no longer involves actual transitions. E.g.\
the one-step logic of the relational $\mu$-calculus is interpreted
over models essentially consisting of a set with a distinguished
subset, abstracting the successors of a single state that is not
itself part of the model. We have applied this principle to
satisfiability checking in coalgebraic (next-step) modal
logics~\cite{SchroderPattinson08}, coalgebraic hybrid
logics~\cite{MyersEA09}, and reasoning with global assumptions in
coalgebraic modal logics~\cite{KupkeEA15}. It also appears implicitly
in work on automata for the coalgebraic
$\mu$-calculus~\cite{FontaineEA10}, which however establishes only a
doubly exponential upper bound in the case without tractable modal
tableau rules.

Our main example applications are on the one hand the graded modal
$\mu$-calculus and its extension with Presburger modalities, i.e.\
with (monotone) linear inequalities, and on the other hand the
extension of the (two-valued) probabilistic
$\mu$-calculus~\cite{CirsteaEA11a,LiuEA15} with (monotone) polynomial
inequalities. While the graded $\mu$-calculus as such is known to be
in \ExpTime~\cite{KupfermanEA02}, the other mentioned instances of our
result are, to our best knowledge, new. At the same time, our proofs
are fairly simple, even compared to specific ones, e.g.\ for the
graded $\mu$-calculus.

Technically, we base our results on an automata-theoretic treatment
by means of standard parity automata with singly-exponential branching degree
(in particular on modal steps), thus precisely enabling the
singly-exponential upper bound, in contrast to previous work in~\cite{FontaineEA10} where the introduced $\Lambda$-automata lead to doubly-exponential branching on modal steps in the resulting satisfiability games.
Our new algorithm for satisfiability
witnessing the singly-exponential time bound is, in fact, a global caching
algorithm~\cite{GoreWidmann09,GoreNguyen13}, and is able to decide the
satisfiability of nodes on-the-fly, that is, possibly 
before the tableau is fully expanded, thus offering a
perspective for practically feasible reasoning. A side result of our
approach is a singly-exponential bound on minimum model size for
satisfiable formulas for \emph{all}
coalgebraic $\mu$-calculi, calculated only in terms of the size of the
parse tree of the formula and its alternation depth (again, the best
previously known bound for the case without tractable modal tableau
rules~\cite{FontaineEA10} was doubly exponential). This bound is new
even in the case of the graded $\mu$-calculus \emph{over multigraphs}
-- over Kripke frames, it is clearly just not true as the model size
can depend exponentially on numbers occurring in a formula when these
are coded in binary, again illustrating the smoothness of multigraph
semantics. Moreover, we identify a criterion for a polynomial bound on
branching in models, which holds in all our examples.

The material is organized as follows. In 
Section~\ref{sec:prelims}, we recall the basics of
coalgebra and the coalgebraic $\mu$-calculus. We
outline our automata-theoretic approach in Section~\ref{sec:tracking},
and present the global caching algorithm and its runtime analysis in
Section~\ref{section:alg}. Soundness and completeness of the algorithm
are proved in Section~\ref{sec:correctness}.


\section{The Coalgebraic $\mu$-Calculus}\label{sec:prelims}

We recall the basics of the framework of coalgebraic
logic~\cite{Pattinson03,Schroder08} and the coalgebraic
$\mu$-calculus~\cite{CirsteaEA11a}. For ease of notation, we restrict
the technical development to unary modalities in this work, noting
that all proofs naturally generalize to the $n$-ary setting; in fact,
we will liberally use higher arities in examples.  We fix a
$\mathbf{Set}$-endofunctor $T$, where elements of~$TX$ should be
regarded as structured collections over~$X$ that will later serve as
collections of successors of states (in the most basic example, $T$ is
powerset~$\Pow$), and a \emph{modal similarity type} $\Lambda$, that is, a
set of unary modal operators. We assume that $\Lambda$ is closed under
duals, i.e., that for each modal operator $\hearts\in\Lambda$, there
is a \emph{dual} $\overline{\hearts}\in\Lambda$ such that
$\overline{\overline{\hearts}}=\hearts$ for all $\hearts\in\Lambda$.
We interpret modal operators $\hearts\in\Lambda$ as
\emph{$T$-predicate liftings}, that is, natural transformations
$\sem{\hearts}:\mathcal{Q}\to \mathcal{Q}\circ T^{\mathit{op}}$ where
$\mathcal{Q}:\mathbf{Set}^{\mathit{op}}\to\mathbf{Set}$
denotes the contravariant powerset functor.  Predicate liftings thus
are just families of functions
$\sem{\hearts}_X:\mathcal{Q}(X)\to\mathcal{Q}(TX)$ that satisfy
\emph{naturality}, i.e.\
$\sem{\hearts}_X(f^{-1}[A])= (T f)^{-1}[\sem{\hearts}_Y(A)]$ for all
$X,Y\in\mathbf{Set}$, all $f:X\to Y$ and all $A\subseteq Y$, where
$f^{-1}$ denotes preimage.  We assume that $\Lambda$ comes with a
predicate lifting $\sem{\hearts}$ for each $\hearts\in\Lambda$;
furthermore we require that the duality of modal operators is
respected, i.e. that
$\sem{\hearts}_V(U)=\overline{\sem{\overline{\hearts}}_V\overline{U}}$
for all sets $V$, $U\subseteq V$, where for all sets $U$ and the
according obvious base set $V$,
$\overline{U}=\{u\in V\mid u\notin U\}$ denotes the \emph{complement}
of $U$ in $V$.  Given a set $U$, a function $f:\Pow(U)\to\Pow(U)$ is
\emph{monotone} if for all $A,B\subseteq U$, $A\subseteq B$ implies
$f(A)\subseteq f(B)$.  To ensure the existence of fixpoints of
formulas, we require that all predicate liftings are monotone.

\begin{definition}[Coalgebraic $\mu$-calculus~\cite{CirsteaEA11a}]
Let $\mathbf{V}$ be an infinite set of \emph{fixpoint variables}.
Formulas of the \emph{coalgebraic $\mu$-calculus} (over $\Lambda$)
are given by the grammar
\begin{align*}
\psi,\phi ::= \bot \mid \top \mid \psi\wedge\phi \mid 
\psi\vee\phi \mid \hearts\phi \mid X \mid \mu X.\,\psi \mid \nu X.\,\psi\qquad
\qquad \hearts\in \Lambda, X\in\mathbf{V}
\end{align*}
Formulas are interpreted over \emph{$T$-coalgebras}, that is, pairs
$(C,\xi)$, consisting of a set $C$ of \emph{states} and a
\emph{transition function} $\xi:C\to TC$ that assigns a structured
collection $\xi(x)\in TC$ of successors (and observations) to
$x\in C$; e.g.\ $\Pow$-coalgebras are just Kripke frames. The
valuation of fixpoint variables requires partial functions
$i:\mathbf{V}\pfun \Pow(C)$ that assign sets $i(X)$ of states to
fixpoint variables $X$. To interpret formulas over $(C,\xi)$, we
define the expected clauses for propositional formulas plus
\begin{align*}
\sem{\hearts\psi}_i &= \xi^{-1}[\sem{\hearts}_C(\sem{\psi}_i)]&
\sem{\mu X.\,\psi}_i &= \mathsf{LFP}(\sem{\psi}^X_i)\\
\sem{X}_i &= i(X)&
\sem{\nu X.\,\psi}_i &= \mathsf{GFP}(\sem{\psi}^X_i),
\end{align*}
where $\mathsf{LFP}$ and $\mathsf{GFP}$ compute the least and greatest
fixpoints of their argument functions, respectively, where
$\sem{\psi}^X_i(A)=\sem{\psi}_{i[X\mapsto A]}$ for $A\subseteq C$ and
where $(i[X\mapsto A])(X)=A$ and $(i[X\mapsto A])(Y)=i(Y)$ for
$Y\neq X$.  Thus we have $x\in\sem{\hearts\psi}_i$ if and only if
$\xi(x)\in\sem{\hearts}_C(\sem{\psi}_i)$.  By the monotonicity of
predicate liftings, the extremal fixpoints of the functions
$\sem{\psi}^X_i$ are indeed defined.  Although the logic does not
contain negation as an explicit operator, negation can be defined by
taking negation normal forms. Similarly, the framework does not force
the inclusion of propositional atoms, which can just be seen as
nullary modalities (see Example~\ref{ex:logics}.1.).  Fixpoint
operators \emph{bind} their fixpoint variables, so that we have
standard notions of bound and free fixpoint variables; a formula is
closed if it contains no free fixpoint variables. For closed formulas
$\psi$, the valuation of fixpoint variables is irrelevant so that we
write $\sem{\psi}$ instead of $\sem{\psi}_i$. A state $x\in C$
\emph{satisfies} a closed formula $\psi$ (denoted $x\models \psi$) if
$x\in\sem{\psi}$.  Given a set~$V$, we put
$\Lambda(V)=\{\hearts a\mid \hearts\in\Lambda, a\in V\}$ and refer to
elements $\hearts a\in \Lambda(V)$ as \emph{modal literals} (over
$V$).  Throughout, we use $\eta\in\{\mu,\nu\}$ to denote extremal
fixpoint operators.  The \emph{size} $|\psi|$ of a formula is its
length over the alphabet
$\{\bot,\top,\wedge,\vee\}\cup \Lambda\cup \mathbf{V} \cup \{\eta
X.\mid X\in\mathbf{V} \}$,
where we assume that the length of $\hearts\in\Lambda$
is the size of its representation.
The \emph{alternation depth} $\mathsf{ad}(\psi)$ of a formula $\psi$
is the depth of dependent nesting of alternating least and greatest
fixpoints in $\psi$; we assign \emph{even} numbers to least fixpoint
formulas and \emph{odd} numbers to greatest fixpoint formulas, and, as
usual, assign greater numbers to outermost fixpoints. For a more
detailed definition of various flavours of alternation depth, see
e.g.~\cite{NiwinskiWalukiewicz96}. The \emph{satisfiability problem}
of the coalgebraic $\mu$-calculus is to decide, for a given formula
$\target$, whether there is a coalgebra $(C,\xi)$ and a state $x\in C$
that satisfies $\target$.  We restrict our development to formulas in
which all fixpoint variables are guarded by modal operators;
furthermore, we assume w.l.o.g. that all formulas are \emph{clean},
i.e\ that each fixpoint variable is bound by at most one fixpoint
operator, and \emph{irredundant}, i.e\ each bound variable is used at
least once.
\end{definition}
\noindent As usual in $\mu$-calculi, the unfolding of fixpoints does
not affect their semantics, that is, for all $X\in\mathbf{V}$ and all
formulas $\psi$, we have
$\sem{\eta X.\,\psi}=\sem{\psi[X\mapsto \eta X.\,\psi]}$.

\begin{example}\label{ex:logics}
We now detail several instances of the coalgebraic $\mu$-calculus; for
further examples, e.g. the alternating-time $\mu$-calculus, see~\cite{CirsteaEA11a}.
\begin{enumerate}
\item To obtain the standard modal $\mu$-calculus~\cite{Kozen83}
  (which contains CTL as a simple fragment), we use the powerset
  functor, that is, we put $T=\mathcal{P}$ so that $T$-coalgebras are
  Kripke frames. To ease readability, we refrain from incorporating
  propositional atoms into the logic, noting that atoms from a set $P$
  can easily be added by switching to the functor $T\times \Pow(P)$
  and then defining their semantics by means of nullary predicate
  liftings.  As modal similarity type, we take
  $\Lambda=\{\Diamond, \Box\}$ and define the predicate liftings
\begin{align*}
\sem{\Diamond}_U(A)&=\{B\in\Pow (U)\mid A\cap B\neq \emptyset\}& 
\sem{\Box}_U(A)&=\{B\in\Pow (U)\mid B\subseteq A\}
\end{align*}
for sets $U$ and $A\subseteq U$.  Standard examples include the
CTL-formula $\mathsf{AF}\,\psi=\mu X.\,(\psi\vee \Box X)$, which
states that on all paths, $\psi$ eventually holds, and the fairness
formula $\nu X.\,\mu Y.\, ((\psi\wedge \Diamond X)\vee \Diamond Y)$,
which asserts the existence of a path on which the formula $\psi$ is
satisfied infinitely often.
\item We interpret the \emph{graded
    $\mu$-calculus}~\cite{KupfermanEA02} over
  multigraphs~\cite{DAgostinoVisser02}, i.e.\ $T$-coalgebras for the
  multiset functor $T=\mathcal{B}$, defined by
\begin{align*}
\mathcal{B}(U)&=\{\theta:U\to \mathbb{N}\cup\{\infty\}\} &
\mathcal{B}(f)(\theta)(v)&=\sum\nolimits_{u\in U\mid f(u)=v}\theta(u)
\end{align*}
for sets $U,V$ and functions $f:U\to V$,
$\theta:U\to \mathbb{N}\cup\{\infty\}$.  Thus $\mathcal{B}$-coalgebras
$(C,\xi)$ assign multisets $\xi(x)$ to states $x\in C$, with the
intuition that $x$ has $y\in C$ as successor with multiplicity $k$ if
$(\xi(x))(y)=k$.  We use the modal similarity type
$\Lambda=\{\langle k\rangle,[k]\mid k\in\mathbb{N}\cup\{\infty\}\}$
and define the predicate liftings
\begin{align*}
\sem{\langle k\rangle}_U(A)&=\{\theta\in\mathcal{B}(X)\mid \theta(A)>k\} &
\sem{[k]}_U(A)&=\{\theta\in\mathcal{B}(X)\mid \theta(\overline{A})\leq k\}
\end{align*}
for sets $U$ and $A\subseteq U$, where $\theta(A)=\sum_{a\in A} \theta(a)$.
E.g. the formula $\nu X.\,(\psi\wedge\langle 1 \rangle X)$
expresses the existence of an infinite binary tree in which
the formula $\psi$ is satisfied
globally.
\item Similarly, the two-valued 
\emph{probabilistic $\mu$-calculus}~\cite{CirsteaEA11a,LiuEA15} is obtained
by using the distribution functor $T=\mathcal{D}$
that maps sets $U$ to probability distributions over $U$ with countable support, defined by
\begin{align*}
\mathcal{D}(U)=\{d:U\to (\mathbb{Q}\cap[0,1])\mid \textstyle\sum\nolimits_{u\in U} d(u)=1\}.
\end{align*}
Then $\mathcal{D}$-coalgebras are just Markov chains.
We use the modal similarity type $\Lambda=\{\langle p\rangle,[p]\mid p\in\mathbb{Q}\cap[0,1])\}$
and define the predicate liftings
\begin{align*}
\sem{\langle p\rangle}_U(A)&=\{d\in\mathcal{D}(X)\mid d(A)>p\}& 
\sem{[p]}_U(A)&=\{d\in\mathcal{D}(X)\mid d(\overline{A})\leq p\},
\end{align*}
for sets $U$ and $A\subseteq U$, where again $d(A)=\sum_{a\in A} d(a)$.
\item The \emph{Presburger $\mu$-calculus} is the extension of the
  graded $\mu$-calculus with Presburger arithmetic; the next step
  version of the logic was introduced by Demri and
  Lugiez~\cite{DemriLugiez06}. Its formulas can be interpreted over
  the semantic domain from item 2., that is, over multigraphs.  We
  introduce new higher-arity modalities by putting
  $\Lambda=\{L_{a_1,\ldots,a_n, b},M_{a_1,\ldots,a_n,b}\mid
  a_1,\ldots,a_n,b,n\in\mathbb{N}\}$
  and define the predicate liftings
\vspace{-5pt}
\begin{align*}
\sem{L_{a_1,\ldots,a_n, b}}_U(A_1,\ldots,A_n)&=\{\theta\in\mathcal{B}(X)\mid \textstyle\sum\nolimits^n_{i=1}a_i\cdot \theta(A_i)> b\}\\
\sem{M_{a_1,\ldots,a_n, b}}_U(A_1,\ldots,A_n)&=\{\theta\in\mathcal{B}(X)\mid \textstyle\sum\nolimits^n_{i=1}a_i\cdot \theta(\overline{A_i})\leq b\},
\end{align*}
for sets $U$ and $A_1,\ldots,A_n\subseteq U$, where $\theta(A)=\textstyle\sum_{a\in A} \theta(a)$.

\item Similarly, we use the semantic domain from item 3., Markov
  chains, to obtain the \emph{probabilistic $\mu$-calculus with
    polynomial inequalities}~\cite{KupkeEA15}.  Again, we introduce
  new higher-arity modalities by putting
  $\Lambda=\{L_{p,b},M_{p,b}\mid p\in \mathbb{Q}_{>0}[X_1,\ldots, X_n],b\in\mathbb{Q}_{\ge 0}\}$
  (i.e.~$p$ ranges over polynomials) and 
\vspace{-5pt}
\begin{align*}
\sem{L_{p,b}}_U(A_1,\ldots,A_n)&=\{d\in\mathcal{D}(X)\mid 
p(d(A_1),\ldots,d(A_n))> b\}\\
\sem{M_{p,b}}_U(A_1,\ldots,A_n)&=\{d\in\mathcal{D}(X)\mid 
p(d(\overline{A_1}),\ldots,d(\overline{A_n}))\leq b\}
\end{align*}
for sets $U$ and $A_1,\ldots,A_n\subseteq U$, where
again $f(A)=\sum_{a\in A} f(a)$.

\end{enumerate}
The logics from the last two items are necessarily less general than
the corresponding next-step
logics~\cite{DemriLugiez06,KupkeEA15}, because the definition of
$\mu$-calculi requires monotonicity of the involved predicate
liftings. To ensure monotonicity, we restrict all coefficients
to be positive, and moreover we restrict the relation in item 4. 
to be $>$ instead of one of the relations
$\{>,<,=\}\cup\{\equiv_k\mid k\in\mathbb{N}\}$.
\end{example}

\section{Tracking Automata}\label{sec:tracking}

We use \emph{parity automata} (e.g.~\cite{Graedel02}) that track
single formulas along paths through potential models to decide whether
it is possible to construct a model in which all least fixpoint
formulas are eventually satisfied. Formally, (nondeterministic) parity
automata are tuples $\mathsf{A}=(V,\Sigma,\Delta,q_0,\alpha)$ where
$V$ is a set of \emph{nodes}; $\Sigma$ is a finite set, the
\emph{alphabet}; $\Delta\subseteq V\times \Sigma\times V$ is the
\emph{transition relation} assigning a set
$\Delta(v,a)=\{u\mid (v,a,u)\in\Delta\}$ of nodes to all $v\in V$ and
$a\in \Sigma$; $q_0\in V$ is the \emph{initial node}; and
$\alpha:V\to\mathbb{N}$ is the \emph{priority function}, assigning
priorities $\alpha(v,a,u)\in\mathbb{N}$ to \emph{transitions} $(v,a,u)\in \Delta$ (this is the standard in recent work since it yields slightly more
succinct automata).  If $\Delta$ is
a function, then $\mathsf{A}$ is said to be \emph{deterministic}.  
The
automaton $\mathsf{A}$ \emph{accepts} an infinite word
$w=w_0w_1,\ldots\in\Sigma^\omega$ if there is a $w$-path through
$\mathsf{A}$ on which the highest priority that is passed infinitely
often is even; formally, the language that is accepted by $\mathsf{A}$
is defined by
$L(\mathsf{A})=\{w\in\Sigma^\omega\mid \exists
\rho\in\mathsf{run}(\mathsf{A},w).\,
\max(\mathsf{Inf}(\alpha\circ\rho))\text{ is even}\}$,
where $\mathsf{run}(\mathsf{A},w)$ denotes the set of infinite
sequences $\rho_0,\rho_1,\ldots\in V^\omega$ such that $\rho_0=q_0$
and for all $i\geq 0$,
$\rho_{i+1}\in \Delta(\rho_i,w_i)$ and where, given an infinite sequence
$S$, $\mathsf{Inf}(S)$ denotes the elements that occur infinitely
often in $S$. 
Here, we see infinite sequences $\rho\in U^\omega$ over
some set $U$ as functions $\mathbb{N}\to U$ and write $\rho_i$ to
denote the $i$-th element of~$\rho$.

We now fix a target formula $\target$ and put $n:=|\target|$
and $k:=\mathsf{ad}(\target)$.
We let $\FLtarget:=\FL(\target)$ denote the \emph{Fischer-Ladner closure}~\cite{Kozen88}
of $\target$; the Fischer-Ladner closure contains all formulas that can arise as subformulas when unfolding each fixpoint in $\target$ exactly once. We put $\mathsf{selections}:=\Pow(\FLtarget\cap\Lambda(\FLtarget))$
where $\FLtarget\cap\Lambda(\FLtarget)$ is the set of formulas
from $\FLtarget$ that are modal literals.
We have $|\FLtarget|\leq n$ and hence $|\mathsf{selections}|\leq 2^n$.

\begin{definition}[Tracking automaton]
The \emph{tracking automaton} for a formula~$\target$ is a nondeterministic parity
automaton $\mathsf{A}_\target=(\FLtarget,\Sigma,\Delta,q_0,\alpha)$,
where $q_0=\target$,
\vspace{-3pt}
\begin{align*}
\Sigma=&
\{(\psi_0\vee\psi_1,b)\in\FLtarget\times \{0,1\}\}\cup
\{(\psi_0\wedge\psi_1,0)\in \FLtarget\times\{0\}\}\cup\\
&\{(\eta X.\,\psi_1,0)\in\FLtarget\times\{0\}\}
\cup\mathsf{~selections~},
\end{align*}
for $\psi,\psi_0,\psi_1\in\FLtarget$, $\sigma\in\mathsf{selections}$
and $b\in \{0,1\}$,
\vspace{-3pt}
\begin{align*}
\Delta(\psi,\sigma)&=\{\psi_0\in\FLtarget\mid \psi\in\sigma\cap\Lambda(\{\psi_0\})\},\\
\Delta(\psi,(\psi_0\vee\psi_1,b))&=\{\psi_b\mid \psi=\psi_0\vee\psi_1 \}\cup\{\psi\mid\psi\neq\psi_0\vee\psi_1\}\\
\Delta(\psi,(\psi_0\wedge\psi_1,0))&=\{\psi_0,\psi_1\mid \psi=\psi_0\wedge\psi_1 \}\cup\{\psi\mid\psi\neq\psi_0\wedge\psi_1\}\\
\Delta(\psi,(\eta X.\,\psi_1,0))&=\{\psi_1[X\mapsto\psi]\mid \psi=\eta X.\,\psi_1 \}\cup\{\psi\mid\psi\neq\eta X.\,\psi_1\}.
\end{align*}
E.g.\ the last clause means that when tracking the unfolding of a
fixpoint $\eta X.\,\psi_1$ at $\psi$, we track~$\psi$ to the unfolding
$\psi_1[X\mapsto\psi]$ if $\psi$ equals the unfolded fixpoint, and
to~$\psi$ otherwise; similarly for the other clauses, and in particular
a modal literal $\psi=\hearts\psi_0$ is only tracked to $\psi_0$ through
a selection $\sigma$ if $\hearts\psi_0\in\sigma$, i.e. if
$\sigma$ selects $\hearts\psi_0$ to be tracked. The priority
function $\alpha$ is derived from the alternation depths of formulas,
counting only unfoldings of fixpoints (i.e. all other transitions have
priority 1). Formally, $\alpha(\psi,\sigma,\psi')=1$ if $\psi=\psi'$
or $\psi$ is not a fixpoint literal; if $\psi$ is a fixpoint literal
and $\psi\neq \psi'$, then we put
$\alpha(\psi,\sigma,\psi')=\mathsf{ad}(\psi)$.
\end{definition}
\noindent Intuitively, words from $\Sigma^\omega$ encode infinite paths
through labelled coalgebras $(C,\xi)$ where letters $\sigma\in 
\mathsf{selections}$
encode modal steps from states $x\in C$ with label $l(x)$ to states
$y\in C$ with label
$\{\psi\in \FLtarget\mid\exists \hearts\in\Lambda.\,
\hearts\psi\in\sigma\cap l(x)\}$.
Letters $(\psi_0\vee\psi_1,b)$ choose disjuncts; the tracking
automaton is nondeterministic for letters $(\psi_0\wedge\psi_1,0)$ and
accepts exactly the words that encode a path that contains a least
fixpoint formula~$\psi$ that is unfolded infinitely often without
being dominated by any outer fixpoint formula (i.e. one with
alternation depth greater than $\mathsf{ad}(\psi)$); denoting these
words by $\mathsf{BadBranch}_\target$, we thus have
$L(\mathsf{A}_\target)=\mathsf{BadBranch}_\target$.  The automaton
$\mathsf{A}_\target$ has size $n$ and priorities $1$ to
$k$.  Using the standard construction
(e.g.~\cite{KKV01}), we transform $\mathsf{\mathsf{A}_\target}$ to an
equivalent B\"uchi automaton of size
$nk$.  Then we determinize the B\"uchi
automaton using e.g. the
Safra/Piterman-construction~\cite{Safra88,Piterman07} and obtain an
equivalent deterministic parity automaton with priorities $0$ to
$2nk-1$ and size
$\mathcal{O}(((nk)!)^2)$.  Finally we complement this parity automaton
by increasing every priority by 1, obtaining a deterministic parity
automaton
$\mathsf{B}_\target=(\detcarrier, \Sigma,\delta,v_0,\detprio)$ of size
$\mathcal{O}(((nk)!)^2)$, with priorities $1$ to
$2nk$ and with
\begin{align*}
L(\mathsf{B}_\target)=\overline{L(\mathsf{A}_\target)}=\overline{\mathsf{BadBranch}_\target}=:\mathsf{GoodBranch}_\target,
\end{align*}
i.e.
$\mathsf{B}_\target$ is a deterministic parity automaton that
accepts the words that encode paths that do not contain a least fixpoint whose satisfaction is deferred indefinitely. We use the labelling function $l:\detcarrier\to\Pow(\FLtarget)$ from the determinized
automaton.

\begin{remark}
It has been noted that the standard tracking automata for
\emph{alternation-free} formulas are, in fact, Co-B\"uchi automata~\cite{FriedmannLatteLange13,HausmannEA16} and that the tracking automata for \emph{aconjunctive} formulas are \emph{limit-deterministic} parity automata~\cite{HausmannEA18}. These considerably simpler automata can be determinized to
deterministic B\"uchi automata of size $3^n$ and to
deterministic parity automata of size $\mathcal{O}((nk)!)$ and
with $2nk$ priorities, respectively. This observation also holds
true for the tracking automata in this work so that for
formulas of suitable syntactic shape,
Lemma~\ref{lem:complexity} below yields accordingly lower bounds
on the runtime of our satisfiability checking algorithm.
\end{remark}

\section{Global Caching for the
Coalgebraic $\mu$-Calculus}\label{section:alg}

We now introduce a generic global caching algorithm that decides the
satisfiability problem of the coalgebraic $\mu$-calculus.  Given an
input formula $\target$, the algorithm expands the determinized and
complemented tracking automaton $\mathsf{B}_\target$ step by step and
propagates (un)satisfiability through this graph; the algorithm
terminates as soon as the initial node $v_0$ is marked as
(un)satisfiable.  The algorithm bears similarity to standard
game-based algorithms for
$\mu$-calculi~\cite{FriedmannLange13a,HausmannEA18,FontaineEA10};
however, it crucially deviates from these algorithms in the treatment
of modal steps: Intuitively, our algorithm decides whether it is
possible to remove some of the modal transitions as well as one of the
transitions from any reachable pair
$((\psi_1\vee\psi_2),0),((\psi_1\vee\psi_2),1)$ of disjunction
transitions within the automaton $\mathsf{B}_\target$ in such a way
that the resulting sub-automaton of $\mathsf{B}_\target$ that no
longer contains choices for disjunctions and (possibly) has a reduced
set of modal transitions is totally accepting, that is, accepts any
word for which there is an infinite run of the automaton.  In doing
so, it is crucial that the labels of state nodes $v$ in the reduced
automaton are \emph{one-step satisfied} in the set of states that are
reachable from $v$ by the remaining modal transitions.  This last
property is ensured by using instances of the so-called \emph{one-step
  satisfiability problem} to propagate (un)satisfiability over modal
transitions; these instances can often be solved in time
singly-exponential in $|\target|$, and in fact, this appears to be the
case for all currently known examples of decidable coalgebraic
$\mu$-calculi.  Previous work in~\cite{FontaineEA10} casts the modal
steps of satisfiability checking for coalgebraic $\mu$-calculi in
terms of satisfiability games but leads to a doubly-exponential number
of modal moves for one of the players and hence does not yield a
singly-exponential upper bound on satisfiability checking (unless a
suitable set of tableau rules is provided).

\begin{definition}[One-step satisfiability 
problem~\cite{KupkeEA15,FontaineEA10}]
  Let $V$ be a finite set, let $v\subseteq \Lambda(V)$ such that
  $a\neq b$ whenever $\hearts_1 a , \hearts_2 b\in v$, and let
  $U\subseteq\Pow(V)$.  The \emph{one-step satisfiability problem} for
  inputs $v$ and $U$ is to decide whether
  $TU\cap\sem{v}_1\neq\emptyset$, where
\begin{align*}
\sem{v}_1=\bigcap_{\hearts a\in v}\sem{\hearts}
\{u\in U\mid a\in u\}.
\end{align*}
We denote denote the time it takes to
solve the problem for input $v$ and $U$ by $t(|v|,|U|)$,
where $|v|=\sum_{\hearts a\in v}|\hearts T|$, the size of $v$, takes the representation of modal operators into account
and where $|U|$ is just the number of elements of $U$.
\end{definition}
  
\begin{remark}\label{rem:bounds}
  We keep the definition of the actual one-step logic as mentioned in
  the introduction somewhat implicit in the above definition of the
  one-step satisfiability problem. One can see that it contains two
  layers: a purely propositional layer embodied in~$U$, which
  postulates which propositional formulas over~$V$ are satisfiable,
  and a modal layer with nesting depth of modalities uniformly equal
  to~$1$, embodied in the set~$v$, which specifies constraints on an
  element of $TU$.
\end{remark}

\begin{example}\label{ex:onestepsat}
  For the one-step fragment of the standard modal $\mu$-calculus
  (Example~\ref{ex:logics}.1.), the one-step satisfiability problem
  for given $v\subseteq\Lambda(V)$ and $U\subseteq\Pow(V)$ consists
  in deciding whether there is a set $A\in \Pow (U)\cap\sem{v}_1$,
  that is, such that for each $\Diamond a\in v$, there is 
  $u\in A$ such that $\psi\in u$, and for each $\Box a\in v$ and each
  $u\in A$, $a\in u$.  Here we have $t(|v|,|U|)\leq |v|\cdot |U|$. 
  For the one-step fragment of the graded
  $\mu$-calculus (Example~\ref{ex:logics}.2.), the problem for input
  $v$ and $U$ consists in deciding whether there is a multiset
  $\theta\in\mathcal{B}(U)$ with $\theta\in\sem{v}_1$; the latter
  is the case if for each $\langle k\rangle a\in v$, we have
  $\sum_{u\in U_a}\theta(u)>k$, where
  $U_a=\{u\in U\mid a\in u\}$ and for each $[k]a\in v$,
  we have $\sum_{u\in U_{\neg a}}\theta(u)\leq k$, where
  $U_{\neg a}=\{u\in U\mid a\notin u\}$.
\end{example}
\noindent Let $l=2|\target|\cdot\mathsf{ad}(\target)$ denote the
number of priorities in $\mathsf{B}_\target$.  Nodes whose labels
consist exclusively of modal literals are referred to as
\emph{saturated nodes} or \emph{states}.  We denote the set of states
by $\mathsf{states}\subseteq \detcarrier$ and the set of pre-states,
that is, non-state nodes, by
$\mathsf{prestates}\subseteq \detcarrier$.  For each pre-state
$v\in\mathsf{prestates}$, we also fix a non-modal formula
$\psi_v\in l(v)$.  We now define $l$-ary set-valued functions $f$ and
$g$ that compute one-step (un)satisfiability w.r.t. their argument
sets.

\begin{definition}[One-step propagation] For sets $G\subseteq \detcarrier$
and $\mathbf{X}=X_1,\ldots, X_{l}
\subseteq G^l$, we put
\begin{align*}
f(\mathbf{X})=&\{v\in \mathsf{prestates}\mid \exists b\in \{0,1\}.\,
\delta(v,(\psi_v,b))\in X_{\detprio(v,(\psi_v,b))}\}\cup\\
&\{v\in \mathsf{states}\mid T(\bigcup_{1\leq i\leq l} X_i(v))\cap\sem{l(v)}_1\neq\emptyset\}\\
g(\mathbf{X})=&\{v\in \mathsf{prestates}\mid \forall b\in \{0,1\}.\,
\delta(v,(\psi_v,b))\notin X_{\detprio(v,(\psi_v,b))}\}\cup\\
&\{v\in \mathsf{states}\mid T(\bigcup_{1\leq i\leq l} X_i(v))\cap\sem{l(v)}_1=\emptyset\},
\end{align*}
where $\detprio(v,(\psi_v,b))$ abbreviates $\detprio(v,(\psi_v,b),\delta(v,(\psi_v,b)))$ and where
\begin{align*}
X_i(v)=\{l(u)\in X_i\mid \exists\sigma\in\mathsf{selections}.\,
\delta(v,\sigma)=\{u\}, \detprio(v,\sigma,u)=i\}.
\end{align*}
\end{definition}

\noindent

Since for states $v$, $l(v)\subseteq\Lambda(\mathbf{F})$
and since $\bigcup_{1\leq i\leq l} X_i(v)\subseteq\Pow(\FLtarget)$,
one-step propagation steps for states are just 
instances of the one-step satisfiability problem with $V=\FLtarget$.
Since states have at most $|\target|$ modal literals in their labels, these
instances can be solved in time $t(|\target|,2^{|\target|})$.

\begin{definition}[Propagation] Given a set $G$, we put
\begin{align*}
\mathbf{E}_G&=\eta_l X_l.\,\ldots \eta_2 X_2.\eta_1 X_1. f(\mathbf{X})\\
\mathbf{A}_G&=\overline{\eta_l} X_l\,\ldots \overline{\eta_2} X_2.\overline{\eta_1} X_1. g(\mathbf{X}),
\end{align*}
where $\mathbf{X}=X_1,\ldots,X_l$ for $X_i\subseteq G$,
where $\eta_i= \mu$ for odd $i$, $\eta_i=\nu$ for even $i$
and where $\overline{\eta}=\mu$ if $\eta=\nu$ and 
$\overline{\eta}=\nu$ if $\eta=\mu$.
\end{definition}

\noindent The set $\mathbf{E}_G$ contains nodes $v\in G$ for which
there are choices for all disjunction and modal transitions that are
reachable from $v$ within $G$ such that the labels of all reachable
states in the chosen sub-automaton of $\mathsf{B}_\target$ are
one-step satisfied and such that on all paths through the chosen
sub-automaton, the highest priority that is passed infinitely often is
even, the intuition being that no least fixpoint is
unfolded infinitely often without being dominated.  Dually, the set
$\mathbf{A}_G$ contains nodes for which there exist no such suitable
choices.

We recall that $v_0\in\detcarrier$ is the initial state of the
determinized and complemented tracking automaton $\mathsf{B}_\target$.
The algorithm expands $\mathsf{B}_\target$ step-by-step starting from
$v_0$; for pre-states $u$, the expansion step adds nodes according to
the fixed non-modal formula $\psi_u$ that is to be expanded next, and
for states, the expansion follows all (matching) selections.  The
order of expansion can be chosen freely, e.g. by heuristic
methods. Optional intermediate propagation steps can be used
judiciously to realize on-the-fly solving.

\begin{algorithm}[Global caching]\label{alg:global}
To decide the satisfiability of the input formula $\target$,
initialize the sets of
\emph{unexpanded} and \emph{expanded} nodes, $U=\{v_0\}$ and $G=\emptyset$,
respectively.
\begin{enumerate}
\item Expansion: Choose some unexpanded node $u\in U$, remove it from $U$ and add it to $G$.
If $u$ is a pre-state, then add the set $\{\delta(u,\sigma)\mid\sigma\in \Sigma\cap(\psi_u\times\{0,1\})\}$ to $U$.
If $u$ is a state, then add the set $\{\delta(u,\sigma)\mid
\sigma\in\mathsf{selections}\}$ to $U$.
\item Optional propagation: Compute $\mathbf{E}_G$ and/or $\mathbf{A}_G$. If
$v_0\in\mathbf{E}_G$, then return `satisfiable`, if $v_0\in\mathbf{A}_G$,
then return `unsatisfiable`.
\item If $U\neq\emptyset$, then continue with step 1.
\item Final propagation: Compute $\mathbf{E}_G$. If
$v_0\in\mathbf{E}_G$, then return `satisfiable`, otherwise return
`unsatisfiable`.
\end{enumerate}

\end{algorithm}

\begin{lemma}\label{lem:complexity}
Given a target formula $\target$ with $|\target|=n$ and
$\mathsf{ad}(\target)=k$, Algorithm~\ref{alg:global} terminates and 
runs in time $\mathcal{O}(((nk)!)^{4nk}\cdot t(n,2^n))$.

\end{lemma}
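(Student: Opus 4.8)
The plan is to split the analysis into a termination argument and a running-time bound, the latter further decomposed into the cost of the two phases of Algorithm~\ref{alg:global}, expansion and propagation. Since the lemma concerns only termination and running time (soundness and completeness being deferred to Section~\ref{sec:correctness}), I need not argue that the fixpoints $\mathbf{E}_G,\mathbf{A}_G$ compute the intended acceptance information; I only have to bound the cost of evaluating them.

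First I would establish termination. The set $G$ of expanded nodes is at all times a subset of the carrier $\detcarrier$ of the deterministic parity automaton $\mathsf{B}_\target$, which is finite with $|\detcarrier|=\mathcal{O}(((nk)!)^2)$ by the construction in Section~\ref{sec:tracking}. Each execution of the expansion step removes one node from $U$, adds it to $G$, and enqueues only successor nodes, all of which again lie in $\detcarrier$; as no node is expanded twice, at most $|\detcarrier|$ expansion steps occur. After the last of these, $U=\emptyset$, so the loop exits, the algorithm reaches the final propagation, and halts. In particular the number of propagation rounds is bounded by $|\detcarrier|+1$.

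The heart of the bound is the cost of a single propagation round, i.e.\ of computing $\mathbf{E}_G=\eta_l X_l.\ldots\eta_1 X_1.\,f(\mathbf{X})$ (and dually $\mathbf{A}_G$ from $g$). This is an $l$-fold alternating nested fixpoint of $f$ over the complete lattice $\Pow(G)$, with $l=2nk$. I would compute it by the standard nested fixpoint iteration: each fixpoint at each of the $l$ nesting levels stabilises after at most $|G|+1$ approximation steps, since $\Pow(G)$ has height $|G|$ and $f$ is monotone in every argument, so nesting $l$ such iterations bounds the number of evaluations of $f$ by $(|G|+1)^l$. With $|G|\le|\detcarrier|=\mathcal{O}(((nk)!)^2)$ and $l=2nk$ this gives $(|\detcarrier|+1)^{2nk}=\mathcal{O}(((nk)!)^{4nk})$ evaluations of $f$ per round. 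The monotonicity of $f$ (and $g$) in each $X_i$ that legitimises this convergence bound I would verify explicitly: $f$ is assembled from membership tests $\delta(v,(\psi_v,b))\in X_{\detprio(v,(\psi_v,b))}$, from the unions $\bigcup_i X_i(v)$, and from the one-step condition $T(\bigcup_i X_i(v))\cap\sem{l(v)}_1\neq\emptyset$, which is monotone in its argument set by monotonicity of the predicate liftings.

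It remains to cost one evaluation of $f$. For a pre-state this amounts to the two cheap membership tests above; for a state $v$ it amounts to deciding $T(\bigcup_{i}X_i(v))\cap\sem{l(v)}_1\neq\emptyset$, which, as observed after the definition of one-step propagation, is an instance of the one-step satisfiability problem with $V=\FLtarget$, at most $n$ modal literals in $l(v)$, and argument set of size at most $2^n$, hence solvable in time $t(n,2^n)$. Summing over the at most $|\detcarrier|$ states, one evaluation of $f$ costs $\mathcal{O}(|\detcarrier|\cdot t(n,2^n))$. Multiplying the three bounds---at most $|\detcarrier|+1$ rounds, at most $(|\detcarrier|+1)^{2nk}$ evaluations per round, and cost $\mathcal{O}(|\detcarrier|\cdot t(n,2^n))$ per evaluation---and absorbing the factors polynomial in $|\detcarrier|=\mathcal{O}(((nk)!)^2)$ into the dominant term $(|\detcarrier|+1)^{2nk}=\mathcal{O}(((nk)!)^{4nk})$, while noting that the expansion steps are only polynomial in $|\detcarrier|$, yields the claimed bound $\mathcal{O}(((nk)!)^{4nk}\cdot t(n,2^n))$. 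The main obstacle I anticipate is precisely this bookkeeping: pinning down the nested-iteration bound $(|\detcarrier|+1)^l$ and checking that substituting $|\detcarrier|=\mathcal{O}(((nk)!)^2)$ and $l=2nk$ collapses the exponent to $4nk$, while keeping the comparatively small polynomial overheads from the per-evaluation cost and the number of propagation rounds under control.
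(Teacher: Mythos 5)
Your proposal is correct and follows essentially the same route as the paper's proof: termination and the number of expansion steps are bounded by $|\detcarrier|\in\mathcal{O}(((nk)!)^2)$, and the cost is dominated by evaluating the nested fixpoints $\mathbf{E}_G,\mathbf{A}_G$ of depth $l=2nk$ by standard iteration, giving $|\detcarrier|^{2nk}$ evaluations of $f$ (resp.\ $g$), each reducible to instances of the one-step satisfiability problem of cost $t(n,2^n)$. Your explicit verification of monotonicity of $f$ and your accounting of the per-evaluation cost as a sum over all states are slightly more careful than the paper's write-up, but the argument and the resulting bound are the same.
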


\begin{proof}
The loop of the algorithm expands the
determinized and complemented tracking automaton node by node and hence is executed at most 
$|\detcarrier|\in\mathcal{O}(((nk)!)^2)\in 2^{\mathcal{O}(nk\log n)}$ times. A single expansion step can be implemented
in time $\mathcal{O}(2^n)$ since propositional expansion is unproblematic
and for the modal expansion of a state $u$, all (matching) selections,
of which there are (at most) $2^n$, have to be considered. 
A single propagation step consists
in computing two fixpoints of nesting depth $l=2nk$ 
of the functions $f$ and $g$ over $\Pow(\detcarrier)$ and
can hence be implemented in time $2(|\detcarrier|^{2nk}\cdot t(n,2^n))\in\mathcal{O}(((nk!)^2)^{2nk}\cdot t(n,2^n))
\in 
2^{\mathcal{O}(n^2k^2\log n+\log(t(n,2^n)))}$,
noting that a single computation of
$f(\mathbf{X})$ and $g(\mathbf{X})$ for a tuple $\mathbf{X}\subseteq (\detcarrier)^k$ 
can be implemented in time $\mathcal{O}(t(n,2^n))$:
for pre-states, the one-step propagation is unproblematic 
and for states, it consists in solving the one-step satisfiability problem
with inputs of size at most $n$ and $2^n$,
as explained above.
Thus the complexity of the whole
algorithm is dominated by the complexity of the propagation step.\qed
\end{proof}

\begin{corollary}
If the one-step satisfiability problem of a coalgebraic logic
for inputs $v$ and $U$ with $|U|\leq 2^{|v|}$ can be solved in
time $t(|v|,|U|)\leq 2^{p(|v|)}\cdot p'(|U|)$,
where $p$ and $p'$ are some polynomial functions, 
then the satisfiability problem of the 
$\mu$-calculus over the logic is in \ExpTime.
\end{corollary}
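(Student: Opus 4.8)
The plan is to obtain the statement as a direct consequence of Lemma~\ref{lem:complexity} together with the soundness and completeness of Algorithm~\ref{alg:global} (to be established in Section~\ref{sec:correctness}). Correctness guarantees that the algorithm decides satisfiability of the input formula $\target$, so it suffices to check that its runtime is singly exponential in $n=|\target|$ under the stated hypothesis on the one-step satisfiability problem. By Lemma~\ref{lem:complexity}, this runtime is $\mathcal{O}(((nk)!)^{4nk}\cdot t(n,2^n))$ with $k=\mathsf{ad}(\target)$, so the task reduces to bounding each of the two factors by a function of the form $2^{q(n)}$ for some polynomial $q$; the product of two such functions is again of this form, which is exactly what it means for the problem to lie in \ExpTime.

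First I would dispose of the combinatorial factor $((nk)!)^{4nk}$. Since the alternation depth never exceeds the formula size we have $k\le n$, hence $nk\le n^2$. Using the crude estimate $(nk)!\le (nk)^{nk}=2^{nk\log(nk)}$ we obtain $((nk)!)^{4nk}\le 2^{4(nk)^2\log(nk)}\in 2^{\mathcal{O}(n^4\log n)}$, which is singly exponential in $n$. This step requires no hypothesis on the logic and is purely arithmetic.

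The remaining factor $t(n,2^n)$ is where the assumption enters. I would instantiate the hypothesis with $|v|=n$ and $|U|=2^n$; the side condition $|U|\le 2^{|v|}$ then holds (with equality), so the assumed bound yields $t(n,2^n)\le 2^{p(n)}\cdot p'(2^n)$. The first factor $2^{p(n)}$ is singly exponential by assumption, and since $p'$ is a polynomial, say of degree $d$, we have $p'(2^n)\le c\cdot 2^{dn}$ for a constant $c$, which is again singly exponential. Hence $t(n,2^n)\in 2^{\mathcal{O}(p(n)+n)}$.

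Combining the two bounds, the total runtime lies in $2^{\mathcal{O}(n^4\log n)}\cdot 2^{\mathcal{O}(p(n)+n)}=2^{\mathcal{O}(n^4\log n+p(n))}$, and since $p$ is a polynomial the exponent is a polynomial in $n$; thus Algorithm~\ref{alg:global} runs in singly exponential time and decides satisfiability, giving the \ExpTime bound. I do not expect a genuine obstacle here: the argument is essentially arithmetic bookkeeping on top of Lemma~\ref{lem:complexity}. The only points that need care are verifying that the side condition $|U|\le 2^{|v|}$ of the hypothesis is met by the instantiation $(|v|,|U|)=(n,2^n)$ forced by the algorithm, and observing that applying a polynomial $p'$ to the exponential quantity $2^n$ still yields a singly exponential function rather than a doubly exponential one.
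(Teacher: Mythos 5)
Your proposal is correct and follows exactly the route the paper intends: the corollary is stated as an immediate consequence of Lemma~\ref{lem:complexity} (with correctness supplied by Section~\ref{sec:correctness}), and your arithmetic bookkeeping — bounding $((nk)!)^{4nk}$ by $2^{\mathcal{O}(n^4\log n)}$ via $k\le n$, and observing that $p'(2^n)$ remains singly exponential — is precisely the verification the paper leaves implicit.
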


\noindent The complexity bounds obtained by our current semantic
approach thus subsume the earlier bounds obtained by the tableau-based
approaches in~\cite{CirsteaEA11a,HausmannEA16,HausmannEA18}
but also cover new example logics. In
particular we have
\begin{lemma}
The satisfiability problems of the following logics are in \ExpTime:
\begin{enumerate}
\item the standard $\mu$-calculus,
\item the graded $\mu$-calculus, 
\item the (two-valued) probabilistic $\mu$-calculus,
\item the Presburger $\mu$-calculus, 
\item the (two-valued) probabilistic $\mu$-calculus extended with polynomial inequalities
\end{enumerate}
\end{lemma}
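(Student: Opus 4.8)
The plan is to invoke the preceding corollary: for each of the five logics it suffices to exhibit polynomials $p,p'$ such that its one-step satisfiability problem for inputs $v$ and $U$ with $|U|\le 2^{|v|}$ is solvable in time $t(|v|,|U|)\le 2^{p(|v|)}\cdot p'(|U|)$. In every case $|v|$ already accounts for the binary encodings of the numerical parameters occurring in the modalities (thresholds, coefficients, polynomial degrees), whereas $|U|$ is merely the number of propositional types; the uniform strategy I would follow is to confine the genuinely arithmetic part of the decision to the modal parameters while keeping the dependence on $|U|$ polynomial.

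The three arithmetic-free cases are essentially immediate. For the standard $\mu$-calculus, Example~\ref{ex:onestepsat} already gives $t(|v|,|U|)\le|v|\cdot|U|$, which is of the required form. For the probabilistic $\mu$-calculus the one-step problem asks for $d\in\PDist(U)$ with $d(A)>p$ for each diamond and $d(\ov A)\le p$ for each box; introducing one variable $d(u)$ per $u\in U$, this is a system of linear (in)equalities together with $\sum_u d(u)=1$, so feasibility is decided by linear programming in time polynomial in $|v|+|U|$. For the graded $\mu$-calculus the analogous system over $\Bag(U)$ is an integer linear feasibility problem in which every constraint is a threshold on a sum, and I would argue that its feasibility over $\Nat\cup\{\infty\}$ stays within $2^{p(|v|)}\cdot p'(|U|)$.

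The two genuinely arithmetic cases carry the actual work. For the Presburger $\mu$-calculus the modalities $L_{\dots},M_{\dots}$ turn the one-step problem into integer linear feasibility with binary-coded coefficients and thresholds, and I would cite the analysis of Demri and Lugiez~\cite{DemriLugiez06} of the corresponding next-step logic to obtain the bound. For the probabilistic $\mu$-calculus with polynomial inequalities the one-step problem becomes a feasibility question in the existential theory of the reals: find $d\in\PDist(U)$ with $p_i(d(A_{i,1}),\dots,d(A_{i,n_i}))\bowtie_i b_i$. The key observation, following~\cite{KupkeEA15}, is that the polynomials only ever read the sums $d(A)$ for the at most $|v|$ distinct argument sets $A=\{u\in U\mid a\in u\}$ occurring in $v$; introducing one macro variable $y_A$ per such sum reduces the nonlinear part to a system over at most $|v|$ variables, while the original variables $d(u)$ enter only linearly via $y_A=\sum_{u\in A}d(u)$ and $\sum_u d(u)=1$. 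Eliminating these linearly-occurring variables is a polytope projection computable in time polynomial in $|U|$, after which the residual feasibility problem over $\le|v|$ macro variables is settled by a standard decision procedure for the first-order theory of real-closed fields, whose running time is singly exponential in $|v|$.

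The main obstacle is precisely this last case, and within it the requirement that the dependence on $|U|$ be \emph{polynomial} rather than exponential: a naive encoding with one nonlinear variable per element of $U$ would feed $|U|\le 2^{|v|}$ variables into the existential theory of the reals and produce a doubly exponential bound. The reduction to $\le|v|$ macro variables, with the $|U|$ original variables removed by cheap linear projection, is what isolates the singly-exponential arithmetic cost in $|v|$ and keeps the $|U|$-factor polynomial; verifying that this projection together with the subsequent real-feasibility test indeed meets $2^{p(|v|)}\cdot p'(|U|)$, and then applying the corollary, completes the argument. \qed
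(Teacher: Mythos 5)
Your overall strategy coincides with the paper's: both proofs consist of invoking the preceding corollary and then verifying, for each of the five logics, that one-step satisfiability for inputs $v,U$ is decidable in time $2^{p(|v|)}\cdot p'(|U|)$. The difference is that the paper deliberately imports these bounds from prior work -- Lemma~1 of \cite{KupfermanEA02} for the graded case (giving $t(n,2^n)\le(2n+2)^n$) and Example~7 of \cite{KupkeEA15} for the probabilistic and the two arithmetic cases -- whereas you sketch direct arguments (the paper explicitly notes that such direct arguments, via known bounds on sizes of solutions of linear systems, would also work). Your treatment of cases 1 and 3 is fine and matches Example~\ref{ex:onestepsat} and the cited material.

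Two of your case-specific arguments need repair. For the graded case you assert the bound without argument; what is actually needed (and what Lemma~1 of \cite{KupfermanEA02} supplies) is a small-solution property for the integer system, e.g.\ that a satisfying multiset can be chosen with support of size at most $|v|$, after which one enumerates supports ($|U|^{|v|}\le 2^{|v|^2}$ many, a $2^{p(|v|)}$ factor) and solves small systems. More seriously, in case 5 the step ``eliminating the linearly-occurring variables is a polytope projection computable in time polynomial in $|U|$'' does not hold as stated: the inequality description of the projection of the simplex onto the macro coordinates can have a number of facets exponential in $|v|$ (Fourier--Motzkin blow-up), while if you instead keep the cheap vertex description (the convex hull of the at most $|U|$ image points), the subsequent existential-theory-of-the-reals query naively reintroduces $|U|$ convex-combination coefficients and yields a doubly exponential bound -- exactly the trap you set out to avoid. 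The standard fix is a Carath\'eodory argument: any achievable tuple of macro values is a convex combination of at most $|v|+1$ of the (at most $2^{|v|}$ distinct) image points, so one can guess this small support within a $2^{p(|v|)}$ factor and then run the real-closed-field decision procedure on $O(|v|)$ variables only. With that step added, your argument goes through and yields the same bound the paper obtains by citing \cite{KupkeEA15}.
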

\begin{proof}
  It suffices to show that the respective one-step satisfiability
  problems can be solved in time $t(n,2^n)\leq 2^{p(n)}\cdot p'(2^n)$,
  that is, in time singly exponential in $n$,
  for inputs $v$ and $U$ of sizes $|v|\leq n$ and $|U|\leq 2^n$.
  While this follows by relatively easy
  arguments (using known bounds on sizes of solutions of systems of
  real or integer linear inequalities) for all of our examples, we
  import most of the results from previous work for brevity.  For
  standard Kripke logic, we have $p(x)=\log x$ and $p'(x)=x$,
  see Example~\ref{ex:onestepsat}. 
  For the one-step satisfiability
  problem of graded modal logic, by Lemma
  1 in~\cite{KupfermanEA02}, we have $t(n,2^n)\leq (2n+2)^n\leq
  2^{n\log (2n+2)} $ and choose, e.g.,
  $p(x)=x\log(2x+2)$ and $p'(x)=x$.  The corresponding properties for
  (two-valued) probabilistic modal logic and the two arithmetic logics
  (items 4. and 5.) are shown in Example~7
  in~\cite{KupkeEA15}.  \qed
\end{proof}

\begin{remark}
We also obtain a polynomial bound on branching width
in models for all our example logics 
simply by importing Lemma 6 and the
observations in Example 7 from~\cite{KupkeEA15}. 
With exception of the standard $\mu$-calculus,
this bound appears to be novel for all example logics in this work.
\end{remark}

\section{Soundness and Completeness}\label{sec:correctness}

We now prove the central result, that is, the total correctness of Algorithm~\ref{alg:global}. As the sets $\mathbf{E}_G$ and $\mathbf{A}_G$ grow monotonically
with $G$, it suffices to prove equivalence of satisfiability and
containment of the initial node $v_0$ in $\mathbf{E}:=\mathbf{E}_{\detcarrier}$.

\begin{theorem}[Soundness and completeness]
We have 
\begin{align*}
v_0\in\mathbf{E}\text{ if and only if $\target$ is satisfiable}.
\end{align*}
\end{theorem}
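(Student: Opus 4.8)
The plan is to prove the two directions of the biconditional separately, in each case connecting the fixpoint characterization of $\mathbf{E}=\mathbf{E}_{\detcarrier}$ with a concrete model (completeness) or with the acceptance condition of $\mathsf{B}_\target$ along some run (soundness). The central bridge in both directions is the observation, already built into the definition of $\mathbf{E}_G$, that membership of $v_0$ in the nested fixpoint $\eta_l X_l.\ldots\eta_1 X_1.\,f(\mathbf{X})$ is equivalent to a parity condition on a derived one-step game: $v_0\in\mathbf{E}$ exactly when $\eloise$ has a strategy to choose, at each reachable pre-state, a witnessing disjunct $b$, and at each reachable state, a structured successor collection in $T$ one-step satisfying its label, such that every resulting infinite play has even maximal priority. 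I would phrase this as a lemma identifying $\mathbf{E}$ with the winning region of $\eloise$ in the natural satisfiability game over $\mathsf{B}_\target$, using the standard correspondence between alternating nested fixpoints of the profile $\mu\nu\mu\ldots$ and parity winning regions.

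\emph{Soundness} ($v_0\in\mathbf{E}\Rightarrow$ satisfiable). Assuming $v_0\in\mathbf{E}$, I fix a positional winning strategy for $\eloise$ in this game; positionality is guaranteed because the game is a parity game. I then read off a coalgebra $(C,\xi)$ whose carrier consists of the states reachable from $v_0$ under the strategy, with $\xi$ given at each such state by the $T$-structure witnessing one-step satisfiability of its label (whose existence is exactly the $\mathsf{states}$-clause of $f$). The key step is to show $x\models\psi$ for every $\psi\in l(x)$, which I would prove by contradiction: a failure of satisfaction would produce, via the semantics of the fixpoint operators, an infinite path along which some least fixpoint is unfolded infinitely often without domination, i.e.\ a word in $\mathsf{BadBranch}_\target$; but such a path is rejected by $\mathsf{B}_\target$ and hence contradicts the even-priority (parity) condition enforced by $\eloise$'s winning strategy. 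Here the earlier identity $L(\mathsf{B}_\target)=\mathsf{GoodBranch}_\target$ does the essential work of translating the automaton-theoretic acceptance into the model-theoretic statement that all least fixpoints are discharged.

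\emph{Completeness} ($\target$ satisfiable $\Rightarrow v_0\in\mathbf{E}$). Given a coalgebra $(C,\xi)$ and a state $x_0\models\target$, I build a strategy for $\eloise$ witnessing $v_0\in\mathbf{E}$. At each reachable state node $v$ mapped to a current model-state $x$, I use $\xi(x)$ together with naturality of the predicate liftings to supply an element of $T(\bigcup_i X_i(v))$ lying in $\sem{l(v)}_1$, thereby satisfying the one-step clause; at pre-states I select the disjunct actually satisfied at $x$. That every resulting infinite play meets the even parity condition follows from the well-foundedness of least-fixpoint unfoldings in the genuine model: an infinite play violating the condition would again correspond to a $\mathsf{BadBranch}_\target$ word, contradicting the fact that the formulas tracked along it are all satisfied at the corresponding model states.

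\emph{Main obstacle.} I expect the hard part to be the state clauses in both directions, i.e.\ faithfully matching the one-step satisfiability condition $T(\bigcup_i X_i(v))\cap\sem{l(v)}_1\neq\emptyset$ to genuine coalgebraic transitions. In soundness this requires assembling the locally chosen $T$-structures into a single coherent $\xi$ whose successors land in the reachable subset, and in completeness it requires extracting, from an arbitrary $\xi(x)\in TC$, a witness over the finite abstracted successor set $\bigcup_i X_i(v)$, which is exactly where naturality of the liftings and the closure of $\Lambda$ under duals must be invoked carefully. The surrounding fixpoint-to-parity translation and the use of positional determinacy are comparatively routine; the delicate point is keeping the semantic (coalgebraic) and the automata-theoretic (tracking) views of a single infinite branch aligned, so that $\mathsf{GoodBranch}_\target$ on the automaton side corresponds precisely to discharge of all least fixpoints on the model side.
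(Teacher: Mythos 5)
Your proposal is correct in outline but takes a genuinely different route from the paper. You factor both directions through a parity game over $\mathsf{B}_\target$: you identify $\mathbf{E}$ with $\eloise$'s winning region, invoke positional determinacy to extract a model (soundness) and build a strategy from a model (completeness), and discharge the parity condition via $L(\mathsf{B}_\target)=\mathsf{GoodBranch}_\target$ and well-foundedness of least-fixpoint approximants. The paper instead never makes the game explicit: it introduces \emph{pre-semi-tableaux} as the intermediate structure, proves $v_0\in\mathbf{E}$ equivalent to the existence of a pre-semi-tableau with \emph{tracking timeouts} (Lemma~\ref{lem:winner}) by a hand-rolled nested induction/coinduction with explicit finite timeout vectors $\overline{m}$ as termination measures, converts tracking timeouts to \emph{unfolding timeouts} (Lemma~\ref{lem:timeouts}), and then goes to and from models via the Existence and Truth Lemmas and Lemma~\ref{lem:modeltotab}, the latter using transfinite induction over ordinal approximants together with Lemma~\ref{lemm:transf} to collapse transfinite unfolding ranks in an arbitrary model to finite timeouts over the finite carrier $\detcarrier$. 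Your approach buys conceptual economy (positional determinacy and the fixpoint-to-winning-region correspondence replace the explicit timeout bookkeeping); the paper's approach buys a quantitative by-product --- the explicit bounds $m_i\leq|\detcarrier|$ that feed the $\mathcal{O}(((nk)!)^2)$ model-size corollary --- and avoids having to set up a game whose modal moves range over elements of $TU$, which is exactly the step the paper is at pains to keep out of the game structure.

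Two places in your sketch carry more weight than you acknowledge. First, your soundness argument proves the Truth Lemma ``by contradiction'': a failure of $x\models\psi$ should yield a $\mathsf{BadBranch}_\target$ word. In the relational case this follows from the model-checking game, but for arbitrary monotone predicate liftings there is no off-the-shelf two-move modal exchange to cite; making the contradiction precise essentially forces you to reintroduce ordinal signatures or the strong-coherence condition $v\in\psem{\hearts\psi}_{\overline{m}}\Rightarrow\xi(v)\in\sem{\hearts}(\psem{\psi}_{\overline{m}})$ that the paper isolates in Definition~\ref{defn:strongcoherence} --- the point being that the witness $\xi(v)$ must land in the \emph{timeout-indexed} extension $\psem{\psi}_{\overline{m}}$, not merely in $\psem{\psi}$. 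Second, your parity condition is stated over the priorities $\detprio$ of the determinized automaton, whereas the fixpoint unfoldings you want to control live on \emph{traces} in the nondeterministic automaton $\mathsf{A}_\target$; the bridge is exactly the paper's Lemma~\ref{lem:timeouts}, which you use implicitly but should state. Neither point is a fatal gap, but both are where the actual work of the paper's appendix proofs lives.
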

\begin{proof}
By Corollary~\ref{cor:wintimeouts}, it suffices to show that
there is a \emph{pre-semi-tableau} (see Definition~\ref{defn:presemtab}) for $\target$ with \emph{unfolding timeouts} (see Definition~\ref{defn:unfto})
if and only if $\target$ is satisfiable.
So let there be a pre-semi-tableau for $\target$ with unfolding timeouts.
We use the Existence 
Lemma (Lemma~\ref{lem:existence}) to obtain a \emph{strongly coherent coalgebra} (see Definition~\ref{defn:strongcoherence}) which by the Truth Lemma (Lemma~\ref{lem:truth}) is a model for $\target$.
For the converse direction, let $\target$ be satisfiable.
We use
Lemma~\ref{lem:modeltotab}
to extract a pre-semi-tableau with unfolding timeouts from the model.
\qed
\end{proof}

\begin{definition}[Pre-semi-tableau]\label{defn:presemtab}
Given two sets $A$ and $B$, a ternary relation $R\subseteq A\times B\times A$ and two elements $a\in A$, $b\in B$,
we put $R(a)=\{a'\in A\mid \exists b\in B.\, (a,b,a')\in R\}$
and
$R(a,b)=\{a'\in A\mid (a,b,a')\in R\}$.
Let $W\subseteq \detcarrier$ 
be a set of nodes labelled with formulas from $\FLtarget$
and put $U=W\cap\mathsf{prestates}$ and $V=W\cap\mathsf{states}$.
Given a ternary relation $L\subseteq W\times \Sigma\times W$,
the pair $(W,L)$ is a \emph{pre-semi-tableau} for $\target$ if $L\subseteq \delta$
and
 for all $v\in V$, we have $T(L(v))\cap\sem{l(v)}_1\neq\emptyset$,
for all $u\in U$, there is exactly one $b\in\{0,1\}$ such that 
$L(u,(\psi_u,b))=\delta(u,(\psi_u,b))$ and for all other $\sigma\in\Sigma$, $L(u,\sigma)=\emptyset$ and
there is no $L$-cycle that contains only elements from $U$.
A \emph{path} through a pre-semi-tableau is an infinite sequence
$(v_0,\sigma_0),(v_1,\sigma_1),\ldots\in (W\times \Sigma)^\omega$ such that
for all $i$, $v_{i+1}\in L(v_i,\sigma_i)$.
We denote \emph{the} first state
that is reachable by zero or more $L$-steps from a node $v\in W$ by $\lceil v\rceil$ (since
there is no $L$-cycle that contains only elements from $U$,
such a state always exists).
\end{definition}

\noindent Given a state $v$, the relation $L$ of a pre-semi-tableau
thus picks a set $L(v)$ of nodes over which a coherent observation for
$v$ can be built; given a pre-state $u$, $L$ picks a single (pre)state
that is obtained from $u$ by transforming the formula $\psi_u$.

\begin{definition}[Tracking timeouts]\label{defn:trackto}
Given a path $\rho=(v_0,\sigma_0),(v_1,\sigma_1),\ldots$ through a pre-semi-tableau,
we say that priority $i$ \emph{occurs} (at position $j$) in 
$\rho$ if $\beta(v_j, \sigma_j,v_{j+1})=i$, recalling that
$\beta$ is the priority function of the determinised and
complemented tracking automaton $\mathsf{B}_\target$.
Then the path $\rho$ has \emph{tracking timeouts} $\overline{m}=(m_l,\ldots,m_1)$, if for each
odd $1\leq i< l$, priority $i$ occurs at most $m_i$ times in $\rho$ before
some priority greater than $i$ occurs in $\rho$.
Nothing is said about $m_i$ for even $i$, which are in fact irrelevant
and serve only to ease notation.
An element $w\in W$ has \emph{tracking timeouts} $\overline{m}$ in some pre-semi-tableau $(W,L)$
if every path through $(W,L)$ that starts at $w$ has tracking timeouts $\overline{m}$.
A pre-semi-tableau $(W,L)$ has \emph{tracking timeouts} if there is, for each $w\in W$,
some vector $\overline{m}$ such that $w$ has tracking timeouts $\overline{m}$.

\end{definition}

\noindent Intuitively, a pre-semi-tableau $(W,L)$ has tracking
timeouts if every word that encodes an infinite $L$-path through $W$
is accepted by $\mathsf{B}_\target$. We recall that a run of
$\mathsf{B}_\target$ is accepting if the encoded path does not contain
a trace that unfolds some least fixpoint formula infinitely often
without having it dominated.

\begin{definition}[Unfolding timeouts]\label{defn:unfto}
Given a path $\rho=(v_0,\sigma_0),(v_1,\sigma_1),\ldots$ through a pre-semi-tableau
and a sequence of formulas $\Psi=\psi_0,\psi_1,\ldots$, we say that $\Psi$ is a \emph{trace}
of $\psi_0$ in $\rho$ (we also say that $\rho$ \emph{contains} $\Psi$) if $\psi_0\in l(v_0)$
and for all $i>0$, $\psi_i\in l(v_i)\cap\Delta(\psi_{i-1},\sigma_{i-1})$.
For $i$ with $\psi_i=\eta X.\psi$ for some fixpoint variable $X$ and some formula $\psi$,
we say that $\Psi$ \emph{unfolds at level} $\mathsf{ad}(\psi_i)$ at position~$i$.
Then the trace $\Psi$ has \emph{unfolding timeouts} $\overline{m}=(m_k,\ldots,m_1)$ for $\psi_0$ if for each
odd $1\leq i\leq k$, $\Psi$ unfolds at most $m_i$ times at level $i$ before
$\Psi$ unfolds at some level greater than $i$. 
Again the unfolding timeouts for even $i$, that is, for greatest fixpoints,
are irrelevant.
The path $\rho$ has \emph{unfolding timeouts} for $\psi_0$
if there is, for all its traces $\Psi$ of $\psi_0$, some vector $\overline{m}$ such that
$\Psi$ has unfolding timeouts $\overline{m}$ for $\psi_0$.
Given a pre-semi-tableau $(W,L)$,
a node $w\in W$ has \emph{unfolding timeouts} $\overline{m}$ for some formula
$\psi$ 
if every path through $(W,L)$ that starts at $w$ and contains a trace
of $\psi$ has unfolding timeouts $\overline{m}$ for $\psi$.
A pre-semi-tableau $(W,L)$ has \emph{unfolding timeouts} if
for each element $w\in W$ and each formula $\psi\in l(v)$, 
there is
some vector $\overline{m}$ such that $w$ has unfoldings timeouts $\overline{m}$
for $\psi$. We denote 
the set of states that have unfolding timeouts $\overline{m}$
for $\psi$ by $\mathsf{uto}(\psi,\overline{m})\subseteq W$.

\end{definition}

\noindent A pre-semi-tableau $(W,L)$ has unfolding timeouts if for all
words that encode an infinite $L$-path through $W$, all runs of the
nondeterministic tracking automaton $\mathsf{A}_\target$ on the word
are \emph{non}-accepting. We recall that a run of $\mathsf{A}_\target$
is accepting if it unfolds some least fixpoint infinitely often
without having it dominated.

\begin{lemma}\label{lem:timeouts}
Let $(W,L)$ be a pre-semi-tableau. Then $(W,L)$ has tracking timeouts if and only if it has
unfolding timeouts.
\end{lemma}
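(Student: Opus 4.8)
The plan is to read both timeout conditions as quantitative (uniformly bounded) reformulations of the \emph{same} qualitative parity condition on infinite $L$-paths, namely that every such path is a good branch; the two formulations differ only in whether goodness is detected on the deterministic run of $\mathsf{B}_\target$ (tracking timeouts) or on the nondeterministic runs of $\mathsf{A}_\target$ (unfolding timeouts), and they are tied together by the construction $L(\mathsf{B}_\target)=\overline{L(\mathsf{A}_\target)}$. Concretely, I would first isolate a purely combinatorial sub-lemma, parametric in a distinguished set of \emph{rejecting} priorities: in a \emph{finite} edge-priority-labelled graph $H$, every infinite walk from a node $w$ avoids having a rejecting priority as its maximal-infinitely-often priority if and only if there is a single vector $\overline{m}$ of timeout bounds respected by \emph{all} walks from $w$ (bounding, for each rejecting $i$, the number of occurrences of $i$ before some larger priority occurs). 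The nontrivial direction is from the qualitative to the quantitative statement and is handled by pigeonholing on $V(H)$: if some rejecting $i$ occurred more than $|V(H)|$ times within a segment containing no priority greater than $i$, then a node would repeat, producing a reachable cycle whose maximal priority is the rejecting $i$; iterating that cycle yields a walk on which $i$ is maximal infinitely often, a contradiction. Hence $m_i=|V(H)|$ works uniformly (indeed independently of $w$), and the converse is immediate since such bounds prevent any rejecting priority from being maximal infinitely often.

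Next I would apply this sub-lemma twice. Taking $H$ to be the finite graph on $W$ with edges $v\to v'$ for $v'\in L(v,\sigma)$ labelled by $\beta(v,\sigma,v')$ and the odd priorities as rejecting, the sub-lemma yields that $(W,L)$ has tracking timeouts iff every infinite $L$-path is accepted by $\mathsf{B}_\target$; here, since $L\subseteq\delta$ and $\delta$ is functional, an $L$-path together with its encoding word determines the unique $\mathsf{B}_\target$-run, whose transition priorities are exactly the $\beta$-labels along the path. Taking instead the finite \emph{product} graph with nodes $(v,\psi)$ for $v\in W$, $\psi\in l(v)$ and edges $(v,\psi)\to(v',\psi')$ whenever $v'\in L(v,\sigma)$ and $\psi'\in\Delta(\psi,\sigma)$, labelled by the unfolding level $\mathsf{ad}(\psi)$ and with the least-fixpoint levels as rejecting, the sub-lemma yields that $(W,L)$ has unfolding timeouts iff no infinite $L$-path carries a trace that unfolds a least fixpoint infinitely often without domination, i.e. iff no infinite $L$-path carries an accepting run of $\mathsf{A}_\target$. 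A trace is precisely a run of $\mathsf{A}_\target$ staying inside the labels $l(v_i)$; since $l$ arises from the subset/Safra determinization, these labels contain all reachable $\mathsf{A}_\target$-states, so along an $L$-path the traces coincide with the $\mathsf{A}_\target$-runs, and both acceptance notions are tail properties, which absorbs the per-node and per-formula quantifiers appearing in the two definitions.

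Finally I would bridge the two qualitative conditions. For the word $w$ encoded by an infinite $L$-path, ``$\mathsf{B}_\target$ accepts $w$'' means $w\in L(\mathsf{B}_\target)=\overline{L(\mathsf{A}_\target)}$, which holds iff $w\notin L(\mathsf{A}_\target)$, i.e. iff no run of $\mathsf{A}_\target$ on $w$ is accepting; by the previous paragraph this is exactly the unfolding condition. Chaining the three equivalences gives tracking timeouts iff unfolding timeouts. I expect the main obstacle to be precisely this bridging step: the priorities $\beta$ of $\mathsf{B}_\target$ are produced by the Safra/Piterman determinization and bear no direct, position-by-position relationship to the $\mathsf{ad}$-levels of the traces, so the equivalence cannot be obtained by matching priorities along a path. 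It must instead be routed through the language-level identity $L(\mathsf{B}_\target)=\overline{L(\mathsf{A}_\target)}$, with the finite-graph sub-lemma transporting the resulting qualitative statements back to the uniform bounds demanded by the definitions. A secondary point needing care is the trace-versus-run identification together with the per-node and per-formula quantification of the timeouts, which I would discharge using that the labels $l(v)$ are closed under the subset construction and that acceptance depends only on the tail of a path.
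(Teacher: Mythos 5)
Your proof is correct and follows essentially the same route as the paper's: both hinge on the identity $L(\mathsf{B}_\target)=\overline{L(\mathsf{A}_\target)}$ to bridge the two timeout notions, reading tracking timeouts as ``$\mathsf{B}_\target$ accepts every word encoding an $L$-path'' and unfolding timeouts as ``$\mathsf{A}_\target$ accepts no such word.'' The only difference is one of detail: your finite-graph pigeonhole sub-lemma makes explicit the equivalence between the qualitative acceptance conditions and the uniform quantitative bounds, a step the paper's two-sentence proof leaves implicit.
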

\begin{proof}
We recall that $\mathsf{B}_\target$ is obtained from
$\mathsf{A}_\target$ by determinization and subsequent
complementation so that we have $L(\mathsf{B}_\target)=\overline{L(\mathsf{A}_\target)}$.
The result thus follows directly from the fact that
having tracking timeouts ensures that $\mathsf{B}_\target$ accepts all
words that encode a path in $(W,L)$
while having unfolding timeouts ensures that
$\mathsf{A}_\target$ does not accept any word that encodes
a path in $(W,L)$.
\qed
\end{proof}

\begin{lemma}\label{lem:winner}
We have $v_0\in\mathbf{E}$ if and only if there is a pre-semi-tableau for $\chi$
that has tracking timeouts.
\end{lemma}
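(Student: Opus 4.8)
The plan is to prove the equivalence via the standard correspondence between alternating fixpoints and parity games, reading pre-semi-tableaux off winning strategies and conversely. First I would associate to the nested fixpoint $\mathbf{E}=\eta_l X_l.\ldots\eta_1 X_1.\,f(\mathbf{X})$ a finite parity game $\Gm$ played over $\detcarrier$ in which $\eloise$ controls the choices encoded by $f$. At a pre-state $v$, $\eloise$ picks $b\in\{0,1\}$ and the play proceeds to $\delta(v,(\psi_v,b))$, the move carrying priority $\detprio(v,(\psi_v,b))$. At a state $v$, $\eloise$ picks a set $S$ of selection-successors whose labels witness one-step satisfiability, i.e.\ $T(\{l(u)\mid u\in S\})\cap\sem{l(v)}_1\neq\emptyset$, with each $u\in S$ recorded at the priority $\detprio(v,\sigma,u)$ of the selection $\sigma$ reaching it (this is exactly the bookkeeping performed by the sets $X_i(v)$ in the definition of $f$); then $\abelard$ picks some $u\in S$ and the play continues from $u$. $\eloise$ wins a play if the highest priority occurring infinitely often is even. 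Since the assignment $\eta_i=\mu$ for odd $i$ and $\eta_i=\nu$ for even $i$ yields precisely this even-max acceptance condition, the standard unfolding characterisation of nested fixpoints as parity games gives that $v_0\in\mathbf{E}$ iff $\eloise$ has a winning strategy in $\Gm$ from $v_0$; as $\Gm$ is finite, positional strategies suffice.

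For the forward direction, suppose $v_0\in\mathbf{E}$ and fix a positional winning strategy $s$ for $\eloise$. I would let $W$ be the set of nodes reachable from $v_0$ when $\eloise$ plays $s$ and $\abelard$ plays arbitrarily, and define $L$ by recording $\eloise$'s choices: at a pre-state $u\in W$ put $L(u,(\psi_u,b))=\delta(u,(\psi_u,b))$ for the $b$ prescribed by $s$ and $L(u,\sigma)=\emptyset$ otherwise, and at a state $v\in W$ let $L(v)$ be the successor set chosen by $s$. The state condition $T(L(v))\cap\sem{l(v)}_1\neq\emptyset$ is immediate from legality of $s$ and the definition of $f$, and the pre-state condition holds by construction. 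The absence of an $L$-cycle among pre-states follows from guardedness of $\target$: a play looping forever among pre-states would regenerate some least fixpoint without ever passing a modal step, hence would be lost by $\eloise$, so $s$ admits no such loop. Thus $(W,L)$ is a pre-semi-tableau, and since every path through it is a play consistent with $s$, every such path satisfies the even-max condition; by finiteness of $W$ this yields, for each $w$, a uniform vector $\overline{m}$, i.e.\ $(W,L)$ has tracking timeouts.

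For the converse, given a pre-semi-tableau $(W,L)$ with tracking timeouts I would read off a strategy $s$ for $\eloise$: at a pre-state $u$ play the unique $b$ with $L(u,(\psi_u,b))=\delta(u,(\psi_u,b))$, and at a state $v$ play the set $L(v)$, which is a legal move since $T(L(v))\cap\sem{l(v)}_1\neq\emptyset$ by the tableau conditions. Every play consistent with $s$ stays within $W$ and follows $L$, hence is a path through $(W,L)$ and therefore satisfies the even-max condition; so $s$ is winning from $v_0$, and by the correspondence above $v_0\in\mathbf{E}$.

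The hard part will be the equivalence, used in both directions, between \emph{every} infinite path from $w$ satisfying the even-max condition and $w$ admitting a \emph{single} timeout vector $\overline{m}$. The reading from a timeout vector to the qualitative condition is immediate, but extracting a uniform bound from the purely qualitative winning condition requires a progress-measure (signature) argument: over the finite graph $\mathsf{B}_\target$ the positional winning strategy induces, for each odd priority level, a well-founded ranking bounding the number of occurrences of that priority before a higher one occurs, and these ranks assemble into $\overline{m}$. Making this extraction precise, and verifying that it is compatible with the determinisation/complementation steps baked into $\mathsf{B}_\target$ (so that $\detprio$ really does realise the intended parity condition), is the main technical point; the remaining verifications of the pre-semi-tableau clauses and of strategy legality are routine unravelling of the definitions of $f$, $L$, and $\detprio$.
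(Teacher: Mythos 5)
Your argument is correct in outline, but it takes a genuinely different route from the paper. You factor the proof through a parity game: you invoke the standard correspondence between nested fixpoints of monotone functions and unfolding games (plus positional determinacy) to get ``$v_0\in\mathbf{E}$ iff $\eloise$ wins'', translate positional strategies to and from the relation $L$, and then bridge the qualitative even-max condition with the quantitative tracking timeouts by a K\"onig's-lemma/progress-measure argument over the finite graph. The paper never introduces a game: it works directly with approximants of the nested fixpoint, attaching to each $v\in\mathbf{E}$ a vector of ``nested timeouts'' $\overline{m}$ with $m_i\leq|\detcarrier|$ recording the approximant stages of the odd (least-fixpoint) levels, reads $L$ off the witnesses for $v\in f(\mathbf{X})$ at those stages, and observes that nested and tracking timeouts are both defined via $\detprio$ and hence coincide; the converse is a nested induction/coinduction with tracking timeouts as termination measure. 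The two approaches buy different things: yours is more modular and makes the strategy/tableau translation transparent, but it imports the fixpoint--game correspondence (which for this particular $f$, with its one-step-satisfiability move at states, still has to be checked to fit the game format --- it does, by monotonicity) and then has to recover quantitative bounds from the qualitative winning condition after the fact; the paper stays quantitative throughout, which is what the surrounding lemmas actually consume. Two small remarks: your worry about compatibility with determinisation/complementation is misplaced for this lemma --- both $\mathbf{E}$ and tracking timeouts are defined purely in terms of the given $\detprio$, and the relation to the original acceptance condition is the business of Lemma~\ref{lem:timeouts}, not this one. And your argument that there is no $L$-cycle among pre-states because such a play would be lost by $\eloise$ is shakier than needed: after determinisation the priorities on purely propositional loops are not obviously odd; the paper simply derives acyclicity from guardedness (every unfolding places the variable behind a modality, so propositional decomposition terminates), and you should do the same.
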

\begin{proof}[Sketch] If $v_0\in\mathbf{E}$, then the definition of the function $f$
ensures the existence of suitable transitions in $\mathsf{B}_\target$
that can be used to define a pre-semi-tableau $(W,L)$. 
The definition of $(W,L)$ has to be executed 
in a nested inductive-coinductive way, relying on the fact that
$v_0$ is contained in the nested fixpoint $\mathbf{E}$ to ensure
that $(W,L)$ has tracking timeouts. For the converse direction, let
there be a pre-semi-tableau for $\chi$ that has tracking timeouts.
Then $v_0$ has tracking timeouts $\overline{m}$ for some $\overline{m}$.
We show $v_0\in \mathbf{E}$ by nested induction and coinduction, using
$\overline{m}$ as termination measure for the induction parts.
The full proof can be found in the appendix.
 \qed
\end{proof}

\noindent Combining Lemmas~\ref{lem:winner} and~\ref{lem:timeouts}, we
obtain
\begin{corollary}\label{cor:wintimeouts}
We have $v_0\in\mathbf{E}$ if and only if there is a pre-semi-tableau for $\chi$
that has unfolding timeouts.
\end{corollary}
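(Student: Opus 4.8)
The plan is to chain the two immediately preceding lemmas, observing that both are statements about the \emph{same} kind of object, namely a pre-semi-tableau for $\target$. Lemma~\ref{lem:winner} relates containment of the initial node in $\mathbf{E}$ to the existence of a pre-semi-tableau with tracking timeouts, while Lemma~\ref{lem:timeouts} is a per-tableau equivalence between tracking timeouts and unfolding timeouts. Since Lemma~\ref{lem:timeouts} holds for every fixed pre-semi-tableau $(W,L)$, it can be threaded directly through the existential quantifier in Lemma~\ref{lem:winner}, converting ``there is a pre-semi-tableau with tracking timeouts'' into ``there is a pre-semi-tableau with unfolding timeouts'' without changing the witness.

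Concretely, for the forward direction I would assume $v_0\in\mathbf{E}$ and invoke Lemma~\ref{lem:winner} to obtain a pre-semi-tableau $(W,L)$ for $\target$ that has tracking timeouts; applying Lemma~\ref{lem:timeouts} to this very $(W,L)$ shows that it also has unfolding timeouts, so a pre-semi-tableau for $\target$ with unfolding timeouts exists. For the converse, I would start from a pre-semi-tableau $(W,L)$ for $\target$ with unfolding timeouts, use the other direction of Lemma~\ref{lem:timeouts} (again on the same $(W,L)$) to conclude that it has tracking timeouts, and then apply Lemma~\ref{lem:winner} to derive $v_0\in\mathbf{E}$.

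I do not expect any genuine obstacle in this step: the corollary is purely the transitive composition of two equivalences sharing a middle term. The only point worth flagging is that Lemma~\ref{lem:timeouts} must be a statement quantified over individual tableaux (which it is, being phrased as ``$(W,L)$ has tracking timeouts if and only if it has unfolding timeouts''), so that the same witness transfers across both directions; the substantive work -- the nested inductive-coinductive construction of the tableau from the fixpoint $\mathbf{E}$ in Lemma~\ref{lem:winner}, and the determinization-and-complementation argument underlying Lemma~\ref{lem:timeouts} -- lives entirely in those two lemmas and not in their combination here.
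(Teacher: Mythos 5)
Your proposal is correct and matches the paper exactly: the corollary is stated there as an immediate consequence of combining Lemma~\ref{lem:winner} with Lemma~\ref{lem:timeouts}, precisely by threading the per-tableau equivalence of tracking and unfolding timeouts through the existential quantifier. Your added remark that Lemma~\ref{lem:timeouts} applies to each fixed $(W,L)$, so the witness transfers unchanged in both directions, is the right (and only) point to check.
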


\begin{definition}
Given a pre-semi-tableau $(W,L)$ with set of states $V$, we put
\begin{align*}
\psem{\psi}&=\{v\in V\mid l(v)\vdash_{\mathsf{PL}}\psi\} &
\psem{\psi}_{\overline{m}}&=\psem{\psi}\cap\{\lceil u\rceil \in V\mid u\in\mathsf{uto}(\psi,\overline{m})\}
\end{align*}
where $\psi\in\FLtarget$, where $\vdash_\mathsf{PL}$ denotes
propositional entailment and where $\overline{m}$ is a vector of $k$ natural numbers.
\end{definition}
Thus we have $v\in\sem{\psi}_{\overline{m}}$ if there is a node
$u\in W$ 
such that $\lceil u\rceil=v$ and
$u$ has timeouts $\overline{m}$ for $\psi$. This serves to ease 
the proofs of the upcoming Existence and Truth Lemmas as it anchors the 
timeout vector $\overline{m}$ at the node $u$ instead of anchoring it at the state $v$ 
which may not have timeouts $\overline{m}$ for $\psi$ 
(namely, if a greatest fixpoint is unfolded on
the $L$-path from $u$ to $v$).

\begin{definition}[Strong coherence]\label{defn:strongcoherence}
Let $(W,L)$ be a pre-semi-tableau with unfolding timeouts and set of states $V$.
A coalgebra $\mathcal{C}=(V,\xi)$ 
is \emph{strongly coherent} if for all states $v\in V$, for all formulas
$\hearts\psi\in \FLtarget$ and for all timeout-vectors $\overline{m}$,
\begin{align*}
v\in\psem{\hearts\psi}_{\overline{m}} \text{ implies }
\xi(v)\in\sem{\hearts}(\psem{\psi}_{\overline{m}}).
\end{align*}
\end{definition}

\begin{lemma}[Existence]\label{lem:existence}
Let $(W,L)$ be a pre-semi-tableau with set of states $V$
that has unfolding timeouts.
Then there is a strongly coherent coalgebra over $V$.
\end{lemma}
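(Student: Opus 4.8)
The plan is to build the transition map $\xi$ directly from the one-step witnesses guaranteed by the pre-semi-tableau condition, and then to reduce strong coherence to a single set-inclusion by invoking naturality and monotonicity of the predicate liftings. \emph{Construction.} For every state $v\in V$ the pre-semi-tableau condition yields $T(L(v))\cap\sem{l(v)}_1\neq\emptyset$; fix a witness $t_v\in T(L(v))\cap\sem{l(v)}_1$. Each node in $L(v)\subseteq\detcarrier$ reaches a unique first state, giving a map $\lceil\cdot\rceil\colon L(v)\to V$, and I define $\xi(v)=T(\lceil\cdot\rceil)(t_v)\in TV$, so that $\CM=(V,\xi)$ is a coalgebra over $V$.

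\emph{Reduction of coherence.} Fix a state $v$, a modal literal $\hearts\psi\in\FLtarget$ and a timeout vector $\overline{m}$ with $v\in\psem{\hearts\psi}_{\overline{m}}$. Since $l(v)$ is a consistent set of modal literals and $l(v)\PLentails\hearts\psi$, we obtain $\hearts\psi\in l(v)$; hence the conjunct for $\hearts\psi$ in $\sem{l(v)}_1$ gives $t_v\in\sem{\hearts}_{L(v)}(A)$ where $A=\{w\in L(v)\mid\psi\in l(w)\}$. Applying naturality of $\sem{\hearts}$ along $\lceil\cdot\rceil$ and then monotonicity, the goal $\xi(v)=T(\lceil\cdot\rceil)(t_v)\in\sem{\hearts}_V(\psem{\psi}_{\overline{m}})$ reduces to the inclusion $A\subseteq\lceil\cdot\rceil^{-1}[\psem{\psi}_{\overline{m}}]$; that is, for each $w\in L(v)$ with $\psi\in l(w)$ it remains to show $\lceil w\rceil\in\psem{\psi}_{\overline{m}}$.

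\emph{Verifying the inclusion.} This splits into two claims about such a $w$. First, propositional preservation: $\lceil w\rceil\in\psem{\psi}$, i.e. $l(\lceil w\rceil)\PLentails\psi$. Following the pre-state $L$-steps from $w$ to its first state only decomposes Boolean connectives and unfolds fixpoints, and both operations preserve propositional entailment of the tracked formula (the fixpoint case relying on unfolding being semantics-preserving), so $\psi$ stays entailed by the modal-literal label $l(\lceil w\rceil)$. Second, timeout transfer: $w\in\mathsf{uto}(\psi,\overline{m})$. From $v\in\psem{\hearts\psi}_{\overline{m}}$ we obtain $u$ with $\lceil u\rceil=v$ and $u\in\mathsf{uto}(\hearts\psi,\overline{m})$; any path from $w$ carrying a trace of $\psi$ extends, by prepending the pre-state path $u\leadsto v$ (along which the modal literal $\hearts\psi$ is carried unchanged) followed by the modal step $v\to w$, to a path from $u$ carrying a trace of $\hearts\psi$. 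As neither the pre-state steps on a modal literal nor the modal step itself unfold any fixpoint (all have priority $1$), the unfolding counts of the two traces coincide, so the timeouts $\overline{m}$ for $\hearts\psi$ at $u$ descend to timeouts $\overline{m}$ for $\psi$ at $w$. Anchoring at $w$ itself then gives $\lceil w\rceil\in\psem{\psi}_{\overline{m}}$, establishing the inclusion and hence strong coherence.

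\emph{Main obstacle.} The delicate point is the timeout transfer: one must locate exactly where the trace of the modal literal $\hearts\psi$ begins along the pre-state path $u\leadsto v$, and argue that anchoring $\psem{\cdot}_{\overline{m}}$ at $u$ rather than at $v$ correctly absorbs any (greatest-)fixpoint unfoldings occurring on that path, while the modal step contributes no unfolding. Showing that the unfolding counts are genuinely preserved, uniformly over \emph{all} traces of $\psi$ starting at $w$, is where the bulk of the careful bookkeeping resides; the propositional and naturality parts are then routine.
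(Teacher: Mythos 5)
Your proposal is correct and follows essentially the same route as the paper's own proof: pick a one-step witness $t_v\in T(L(v))\cap\sem{l(v)}_1$ to define $\xi(v)$, reduce strong coherence by monotonicity to showing that every $w\in L(v)$ with $\psi\in l(w)$ lands in $\psem{\psi}_{\overline m}$, and transfer the unfolding timeouts from $\hearts\psi$ at $v$ to $\psi$ at $w$ via the fact that the tracking automaton sends $\hearts\psi$ to $\psi$ on the modal step without unfolding any fixpoint. You are somewhat more explicit than the paper about the pushforward along $\lceil\cdot\rceil$ via naturality and about propositional preservation down to the first reachable state, but these are details the paper leaves implicit rather than a different argument.
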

\begin{proof}
Let $v\in V$ be a state with $l(v)=\{\hearts_1\psi_1,\ldots,\hearts_n\psi_n\}$
and put $L_v=\{(\sigma_i,v_i)\in \Sigma\times W\mid (v,\sigma_i,v_i)\in L\}$.
As $(W,L)$ is a pre-semi-tableau, we have $T(L(v))\cap\sem{l(v)}_1\neq \emptyset$,
so we put $\xi(v)=t$ for some $t\in T(L(v))\cap\sem{l(v)}_1\neq \emptyset$.
Let $i$ be a number for which there is a vector $\overline{m}'$
such that $v\in\mathsf{uto}(\hearts_i\psi_i,\overline{m}')$. Such a vector
exists since $(W,L)$ has unfolding timeouts. 
Now let $\overline{m}$ denote the \emph{least} such vector (by lexicographic ordering). It suffices to show
that $t\in\sem{\hearts_i}(\psem{\psi_i}_{\overline{m}})$.
By construction, we have $t\in\sem{\hearts_i}(\{u\in W\mid \psi_i\in l(v)\}\cap L(v))$
and hence $t\in\sem{\hearts_i}(F)$ where
\begin{align*}
F=\psem{\psi_i}\cap\{\lceil w_j\rceil \in V\mid (w_j,\sigma_j)\in L_v, \psi_i\in \Delta(\hearts_i\psi_i,\sigma_j)\}.
\end{align*}
As $v\in \mathsf{uto}(\hearts_i\psi_i,\overline{m})$, every infinite $L$-path
$(v,\sigma_j),(w_j,\sigma'),\ldots$ such that $(\sigma_j,w_j)\in L(v)$ and that
contains a trace of $\hearts_i\psi_i$
has unfolding timeouts $\overline{m}$
for $\hearts_i\psi_i$ and $\psi_i\in\Delta(\hearts_i\psi_i,\sigma_j)$. Hence all such $w_j\in L(v)$ have unfolding timeouts $\overline{m}$
for $\psi_i$. Thus
$F\subseteq\psem{\psi_i}\cap\{\lceil w\rceil \in V\mid w\in\mathsf{uto}(\psi_i,\overline{m})\}=\psem{\psi_i}_{\overline{m}}$, as required.
\qed
\end{proof}

 \begin{definition}[Timed-out satisfaction]
Given sets $W$, $U\subseteq W$, a function $f:\Pow(W)\to\Pow(W)$ and an
ordinal number $\lambda$, we define
\begin{align*}
f^\lambda(U)=\begin{cases}
U & \text{ if } \lambda=0\\
f(f^{\lambda'}(U)) & \text{ if } \lambda=\lambda'+1\\
\bigcup_{k<\lambda} f^k(U) & \text{ if $\lambda$ is a limit-ordinal}
\end{cases}
\end{align*}
Given a least fixpoint formula $\mu X.\psi$ with $\mu X.\sem{\psi}_i^X=
(\sem{\psi}_i^X)^\lambda(\emptyset)$ where $i:\mathbf{V}\to\Pow(C)$ valuates
fixpoint variables, as usual, we say that $x\in(\sem{\psi}_i^X)^\lambda(\emptyset)$
\emph{satisfies $\mu X.\psi$ with timeout} $\lambda$ (under $i$)
and write $x\in\sem{\mu X.\psi}^{\lambda}_i$. 
For all models $(C,\xi)$ and ordinal numbers $\lambda$ with
$|C|=\lambda$, all states $x\in C$ and all least fixpoint formulas
$\mu X.\psi$ such that $x\models \mu X.\psi$, we have $x\in\sem{\mu X.\psi}^\lambda_\epsilon$ where $\epsilon$ denotes the empty valuation of
fixpoint variables. Formulas $\psi$
from nested fixpoints 
are satisfied (under $i$) with vectors of ordinal numbers
$\overline{\lambda}=(\lambda_k,\ldots,\lambda_j)$, ordered
from outermost fixpoint to innermost fixpoint, as timeouts;
here, $i$ is assumed to valuate $k-j$ alternating fixpoint variables.
Again the timeouts for greatest fixpoint variables are irrelevant
and serve only to ease notation.
We write $x\in\sem{\psi}_i^{\overline{\lambda}}$ to indicate that 
$x$ satisfies $\psi$ with nested timeouts
$\overline{\lambda}$ (under $i$).
\end{definition}
We have $\sem{\mu X.\psi}_i^{0}=\emptyset$,
$\sem{\mu X.\psi}_i^{\lambda+1}=\sem{\psi[X\mapsto \mu X.\psi]}_i^\lambda$
for ordinals $\lambda$ and $\sem{\mu X.\psi}_i^{\lambda}=\bigcup_{k<\lambda}
\sem{\mu X.\psi}_i^k$ for limit-ordinals $\lambda$.

In strongly coherent coalgebras, all least fixpoint literals are satisfied after
finitely many unfolding steps:

\begin{lemma}[Truth]\label{lem:truth}In strongly coherent coalgebras, we have
that for all $\psi\in\FLtarget$, 
\begin{align*}
\psem{\psi}\subseteq\sem{\psi}.
\end{align*}
\end{lemma}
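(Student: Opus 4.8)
The plan is to prove, by a nested induction and coinduction, a strengthened statement that makes the timeout vectors explicit: for every $\psi\in\FLtarget$ and every timeout vector $\overline{m}$ we have $\psem{\psi}_{\overline{m}}\subseteq\sem{\psi}^{\overline{m}}_i$, where $\overline{m}$ is read on the right as a vector of finite ordinal timeouts (in the sense of the Timed-out satisfaction definition) and $i$ valuates each free fixpoint variable $Y$ by the pseudo-extension of the fixpoint formula binding it; for closed formulas, in particular for $\target$, this valuation is irrelevant. The stated Lemma then follows at once: any $v\in\psem{\psi}$ is a state, so $v=\lceil v\rceil$, and since the pre-semi-tableau has unfolding timeouts it lies in $\mathsf{uto}(\psi,\overline{m})$ for some $\overline{m}$; hence $v\in\psem{\psi}_{\overline{m}}\subseteq\sem{\psi}^{\overline{m}}_i\subseteq\sem{\psi}_i$, using that timed-out satisfaction implies satisfaction.

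The induction runs over the well-founded order given by the least-fixpoint timeout vector, ordered lexicographically from the outermost to the innermost least-fixpoint level, with the syntactic size of $\psi$ as a secondary measure. The Boolean cases are routine. For $\psi_0\wedge\psi_1$, propositional entailment $l(v)\vdash_{\mathsf{PL}}\psi_0\wedge\psi_1$ gives $l(v)\vdash_{\mathsf{PL}}\psi_b$ for both $b$, and since $\Delta$ carries every trace of the conjunction into a trace of each conjunct, the timeout $\overline{m}$ is inherited, so the induction hypothesis applies to $\psi_0$ and $\psi_1$ separately. For $\psi_0\vee\psi_1$ we use that the relation $L$ resolves each disjunction at the expanding pre-state: the trace witnessing the timeout passes through a single disjunct $\psi_b$ with $l(v)\vdash_{\mathsf{PL}}\psi_b$, reducing to the hypothesis for $\psi_b$ at no larger measure.

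For a modal literal $\hearts\psi$, strong coherence yields $\xi(v)\in\sem{\hearts}(\psem{\psi}_{\overline{m}})$ whenever $v\in\psem{\hearts\psi}_{\overline{m}}$; since a modal (selection) step carries priority $1$ and does not touch the timeout counters, the induction hypothesis gives $\psem{\psi}_{\overline{m}}\subseteq\sem{\psi}^{\overline{m}}_i$, and monotonicity of $\sem{\hearts}$ places $\xi(v)$ in $\sem{\hearts}(\sem{\psi}^{\overline{m}}_i)$, i.e.\ $v\in\sem{\hearts\psi}^{\overline{m}}_i$. For a greatest fixpoint $\nu X.\psi_X$ the corresponding timeout entry is irrelevant, so we argue coinductively: unfolding via $\sem{\nu X.\psi_X}_i=\sem{\psi_X[X\mapsto\nu X.\psi_X]}_i$ and taking $\psem{\nu X.\psi_X}$ as the valuation of $X$, the hypothesis for $\psi_X$ exhibits $\psem{\nu X.\psi_X}$ as a post-fixpoint of $\sem{\psi_X}^X_i$, whence it is contained in the greatest fixpoint. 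For a least fixpoint $\mu X.\psi_X$ at level $j$, one unfolding consumes one unit of the budget $m_j$, so $\psem{\mu X.\psi_X}_{\overline{m}}\subseteq\psem{\psi_X[X\mapsto\mu X.\psi_X]}_{\overline{m}'}$ with $\overline{m}'$ obtained from $\overline{m}$ by decrementing $m_j$ (the entries at strictly inner levels being reset, which does not affect the lexicographic order led by level~$j$); the measure strictly decreases, and the hypothesis gives containment in $\sem{\psi_X[X\mapsto\mu X.\psi_X]}^{\overline{m}'}_i\subseteq\sem{\mu X.\psi_X}^{\overline{m}}_i$.

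The main obstacle is the bookkeeping that ties the inductive (least) and coinductive (greatest) parts together across alternating fixpoints: one must verify that decrementing the level-$j$ timeout while resetting strictly inner entries is a genuine well-founded lexicographic descent, that the single valuation $i$ fixing free variables is simultaneously compatible with the post-fixpoint argument for greatest fixpoints and the ordinal-approximation argument for least fixpoints, and that the timeout vector attached at $\lceil u\rceil$ through $\mathsf{uto}(\psi,\overline{m})$ is transported correctly along Boolean, modal, and unfolding transitions so that strong coherence is invoked with exactly the matching $\overline{m}$. The one place where the pre-semi-tableau structure, rather than $\vdash_{\mathsf{PL}}$ alone, is essential is the disjunction case, where the choice recorded by $L$ selects the single entailed disjunct that the induction hypothesis needs.
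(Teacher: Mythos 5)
Your proposal is correct and follows essentially the same route as the paper's proof: both reduce the claim to the strengthened statement $\psem{\psi}_{\overline{m}}\subseteq\sem{\psi}_{\overline{m}}$ and establish it by nested induction and coinduction with the lexicographic measure $(\overline{m},|\psi|)$, discharging the modal case via strong coherence, the least-fixpoint case by decrementing the timeout at level $\mathsf{ad}(\psi)$, and the greatest-fixpoint case coinductively. Your device of valuating free fixpoint variables by the pseudo-extensions of their binding formulas is only a presentational variant of the paper's auxiliary induction on the number of closed fixpoint subformulas, so no separate comparison is needed.
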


\begin{proof}[Sketch] This proof is standard for $\mu$-calculi; we use nested induction and coinduction to show eventual satisfaction of all least fixpoint formulas that occur in some label,
using unfolding timeouts as termination measure for the induction
parts of the proof. The full proof can be found in the appendix. \qed
\end{proof}

\begin{lemma}[Soundness]\label{lem:modeltotab} Let $\target$ be satisfiable. Then
a pre-semi-tableau for $\target$ with unfolding timeouts can be constructed
over a subset of $\detcarrier$.
\end{lemma}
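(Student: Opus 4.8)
The plan is to prove Lemma~\ref{lem:modeltotab} by starting from a model of $\target$ and systematically extracting a pre-semi-tableau with unfolding timeouts whose nodes live in $\detcarrier$. Let $(C,\xi)$ be a coalgebra with a state $x_0\models\target$. The key idea is to use the determinized tracking automaton $\mathsf{B}_\target$ as a ``coordinate system'': as we traverse the model starting from $x_0$, we simultaneously track which node of $\detcarrier$ we are at, using the labelling function $l:\detcarrier\to\Pow(\FLtarget)$. Concretely, I would associate to each reachable model state $x$ (together with a history of propositional and modal choices) a node $v\in\detcarrier$ such that every formula in $l(v)$ is satisfied at $x$, i.e.\ $x\in\bigcap_{\psi\in l(v)}\sem{\psi}$. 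The initial node $v_0$ can be attached to $x_0$ since $l(v_0)$ is satisfied there.

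Next I would define the relation $L$ by following the model. For a pre-state $v$ attached to $x$, the fixed formula $\psi_v\in l(v)$ is a non-modal formula satisfied at $x$; since $x\models\psi_v$, one of the propositional/unfolding transitions of $\mathsf{B}_\target$ out of $v$ leads to a node $v'$ still satisfied at the \emph{same} model state $x$ (for conjunctions both successors are satisfied, for disjunctions at least one branch is, and for fixpoint unfoldings the semantics is invariant under unfolding). I would set $L(v,(\psi_v,b))=\{\delta(v,(\psi_v,b))\}$ for the appropriate~$b$, ensuring exactly one outgoing propositional transition as required. For a state $v$ attached to $x$, the label $l(v)$ consists of modal literals all satisfied at $x$, meaning $\xi(x)\in\sem{l(v)}_1$ relative to the true extensions in $C$; the one-step satisfiability witness $\xi(x)\in TU\cap\sem{l(v)}_1$ then determines, via the appropriate selections $\sigma$, the modal successor nodes $\delta(v,\sigma)$, each of which I attach to a corresponding successor model state. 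This produces $L\subseteq\delta$ and guarantees $T(L(v))\cap\sem{l(v)}_1\neq\emptyset$ for all states, and, because propositional steps strictly decrease formula complexity along a single model state, there is no $L$-cycle confined to pre-states.

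The crucial remaining task is to establish that this pre-semi-tableau has unfolding timeouts. By Lemma~\ref{lem:timeouts} it is equivalent, and perhaps cleaner, to argue either timeout condition; I would argue unfolding timeouts directly using the well-foundedness of least fixpoints in the model. The point is that whenever a trace unfolds a least fixpoint formula $\mu X.\psi$ at level $i$, the attached model state satisfies $\mu X.\psi$ with some ordinal timeout $\lambda$ (as in the Timed-out satisfaction definition), and each unfolding at level $i$ strictly decreases this ordinal as long as no larger priority (no outer fixpoint) intervenes. Since ordinals are well-founded, no least fixpoint can be unfolded infinitely often without being dominated, which is exactly the unfolding-timeout property; finiteness of the model lets one replace ordinals by natural-number bounds $\overline{m}$.

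I expect the main obstacle to be the bookkeeping that simultaneously tracks the \emph{deterministic} automaton node and the model state while certifying that the unfolding ordinals behave monotonically along traces. The subtlety is that $\mathsf{B}_\target$ is deterministic on each letter but nondeterministic over the choice of letter (selections and disjunct choices), so the attachment of automaton nodes to model states must be defined coinductively over modal steps and inductively over propositional steps, and one must verify that the single ordinal timeout carried by the model state descends precisely in step with the automaton priorities. Care is also needed to confine all generated nodes to $\detcarrier$ rather than an unbounded unravelling; since $\detcarrier$ is finite and transitions are functional per letter, identifying nodes by their reached automaton state keeps $W\subseteq\detcarrier$ as claimed.
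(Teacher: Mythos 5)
Your proposal follows essentially the same route as the paper: build the pre-semi-tableau over the set of automaton nodes whose labels are satisfied somewhere in the model, choose the disjunct/propositional transitions so as to respect the (least) ordinal timeouts with which fixpoint formulas are satisfied at the attached model state, keep all selections for states, and then establish unfolding timeouts by nested (transfinite) induction on those ordinal vectors. One small correction: the passage from ordinal timeouts to the finite vectors $\overline{m}$ rests on the finiteness of $\detcarrier$ (the paper's Lemma~\ref{lemm:transf} closes the relevant least fixpoint over the finite tableau at a finite stage $n$ and then shows $h^\lambda(\emptyset)=h^n(\emptyset)$ for all $\lambda\ge n$), not on finiteness of the model, which cannot be assumed here since the small model property is itself a consequence of this construction.
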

\begin{proof}[Sketch]
Relying on the information from
some fixed model for $\target$,
we construct a pre-semi-tableau for $\target$ by choosing
usable transitions in the determinized and complemented tracking automaton
$\mathsf{B}_\target$. The transitions
have to be chosen in a way that preserves the timeouts with which
formulas are satisfied in the model. Then we use nested transfinite induction
and coinduction to show that the constructed pre-semi-tableau has unfolding timeouts, using the timeout vectors $(\lambda_k,\ldots,\lambda_j)$ 
with which formulas are satisfied in the model
as termination measure for the transfinite induction parts of the proof.
The full proof can be found in the appendix. \qed
\end{proof}

\begin{corollary}
Let $\target$ be a satisfiable coalgebraic $\mu$-calculus formula.
Then $\target$ has a model of size $\mathcal{O}(((nk)!)^2)\in 2^{\mathcal{O}(nk\log n)}$.
\end{corollary}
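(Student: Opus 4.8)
The plan is to read the size bound directly off the carrier of the determinized and complemented tracking automaton $\mathsf{B}_\target$ by chaining together the three structural lemmas already proved. The essential point is that every model produced by our construction is carried by the set of \emph{states} of a pre-semi-tableau, and by Lemma~\ref{lem:modeltotab} such a pre-semi-tableau can always be taken to live on a subset of $\detcarrier$; since $|\detcarrier|\in\mathcal{O}(((nk)!)^2)$, the bound follows at once. Thus the corollary is essentially a bookkeeping combination of the Soundness, Existence and Truth Lemmas, with the quantitative content inherited from the Safra/Piterman determinization.

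In detail, I would argue as follows. Since $\target$ is satisfiable, Lemma~\ref{lem:modeltotab} (Soundness) yields a pre-semi-tableau $(W,L)$ for $\target$ with unfolding timeouts, where $W\subseteq\detcarrier$. Let $V=W\cap\mathsf{states}$ denote its set of states. Applying the Existence Lemma (Lemma~\ref{lem:existence}) to $(W,L)$ produces a strongly coherent coalgebra $\mathcal{C}=(V,\xi)$ carried by $V$. It then remains to exhibit a state of $\mathcal{C}$ that satisfies $\target$ and to bound $|V|$.

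For the designated state I would take $\lceil v_0\rceil$, the first state reachable from the initial node $v_0$ by $L$-steps, which exists because a pre-semi-tableau has no $L$-cycle consisting only of pre-states. Since $\target\in l(v_0)$ and the pre-state transitions of $L$ merely rewrite the chosen non-modal formula $\psi_u$ into propositionally entailed components, saturation along the $L$-path from $v_0$ to $\lceil v_0\rceil$ preserves propositional entailment, giving $l(\lceil v_0\rceil)\PLentails\target$, i.e.\ $\lceil v_0\rceil\in\psem{\target}$. The Truth Lemma (Lemma~\ref{lem:truth}) supplies $\psem{\target}\subseteq\sem{\target}$, so $\lceil v_0\rceil\models\target$ in $\mathcal{C}$. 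Finally $|V|\le|W|\le|\detcarrier|\in\mathcal{O}(((nk)!)^2)=2^{\mathcal{O}(nk\log n)}$, as required.

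I expect no deep obstacle here, precisely because the hard analysis is already encapsulated in the preceding lemmas. The one step deserving care — and the step I would verify most closely — is that the designated state genuinely witnesses $\target$: one must check that $\target$ propagates propositionally down to $\lceil v_0\rceil$ and that restricting the carrier from $W$ to its states $V$, as the Existence Lemma does, does not discard this witness. Both facts follow from the definition of pre-semi-tableaux, but they are where the $\lceil\cdot\rceil$ operator and the propositional-entailment labelling must be matched up carefully.
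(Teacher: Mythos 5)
Your proposal is correct and follows essentially the same route as the paper: obtain a pre-semi-tableau over a subset of $\detcarrier$ from satisfiability, run the Existence and Truth Lemmas to build the model on its states, and read the size bound off $|\detcarrier|\in\mathcal{O}(((nk)!)^2)$. The only (harmless) difference is that you use the tableau from Lemma~\ref{lem:modeltotab} directly, whereas the paper first passes to $v_0\in\mathbf{E}$ and takes the model over $\mathbf{E}\cap\mathsf{states}$; both carriers sit inside $\detcarrier$, so the bound is identical, and your extra care about $\lceil v_0\rceil$ witnessing $\target$ is a detail the paper leaves implicit.
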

\begin{proof}
Let $\target$ be satisfiable. By soundness, we have $v_0\in\mathbf{E}$.
The model for $\target$ that is constructed during the completeness proof
is built over $\mathbf{E}\cap\mathsf{states}\subseteq \detcarrier$.
The stated bound follows since $|\detcarrier|\in\mathcal{O}(((nk)!)^2)$.
\qed
\end{proof}




\section{Conclusion}\label{section:conclusion}

We have shown that the satisfiability problem of the coalgebraic
$\mu$-calculus is in \ExpTime if the corresponding one-step
satisfiability problem can be solved in time singly exponential in $n$
for inputs $v,U$ of sizes $n$ and $2^n$. Prominent examples 
where this is the case include
the graded $\mu$-calculus, the (two-valued) probabilistic
$\mu$-calculus, the Presburger $\mu$-calculus, and the extension of
the two-valued probabilistic $\mu$-calculus with polynomial
inequalities; the \ExpTime bound appears to be novel
for the last two logics.  We also have presented a generic
satisfiability algorithm that realizes the singly exponential time upper
bound under the stated assumption and is suitable for practical use since it
supports global caching and on-the-fly solving. Moreover, we obtained
a novel singly-exponential bound on minimum model size of
satisfiable formulas for \emph{all}
decidable coalgebraic $\mu$-calculi and a polynomial bound on the
branching width in models for all example logics mentioned above.


\newpage

\bibliography{coalgml}

\newpage
\appendix

\section{Appendix: Omitted Proofs and Lemmas}

\textit{Full proof of Lemma~\ref{lem:winner}:} Let $v_0\in\mathbf{E}$. 
All elements $v$ of the nested fixpoint $\mathbf{E}$ (which has nesting depth $l$) have nested timeouts 
$\overline{m}=(m_l,\ldots,m_1)$ with $m_i\leq|\detcarrier|$ that ensure
that $v$ can be shown to be contained in $\mathbf{E}$ while, for all odd
$1\leq i\leq l$, unfolding fixpoints $\eta_i X_i.\eta_{i-1} X_{i-1}.\ldots
\eta_1 X_1. f(X_1,\ldots,X_i,M(m_l,\ldots,m_{i+1}),\ldots,M(m_l))$
at most $m_i$ times before unfolding the fixpoint for some number
greater than $i$; here we use $M(\overline{m})$ to denote the elements 
of $\mathbf{E}$ that have nested timeouts $\overline{m'}$ where
$\overline{m}'$
is just $\overline{m}$ if the length of $\overline{m}$ is even and where
$\overline{m}'$
is obtained from $\overline{m}$ by decreasing the last element
of $\overline{m}$ by one if the length of $\overline{m}$ is odd.
Thus there is, for each $v\in\mathbf{E}$, some least (by lexicographic ordering) vector, denoted by $\overline{m}_v$, such that $v$ has 
nested timeouts 
$\overline{m}_v$. Now we define a pre-semi-tableau for $\target$ over
$\mathbf{E}$: Put $V=\mathbf{E}\cap\mathsf{states}$,
$U=\mathbf{E}\cap\mathsf{prestates}$ and $W=V\cup U$. 
Given a pre-state $u\in U$ with nested timeouts $\overline{m}_u$,
since $u\in \mathbf{E}$, there is a $b$ such that $\delta(u,(\psi_u,b))
\in M(m_l,\ldots,m_{\beta(u,(\psi_u,b),\delta(u,(\psi_u,b)))})$. We put $L(u,(\psi_u,b))=
\delta(u,(\psi_u,b))$ and $L(u,\sigma)=\emptyset$ for all $\sigma\in\Sigma$
with $\sigma\neq(\psi_u,b)$. Given a state $v\in V$ with
nested timeouts $\overline{m}_v$, since $v\in\mathbf{E}$,
we have $T(\bigcup_{1\leq i<l} M_i(v))\cap\sem{l(v)}_1\neq\emptyset$
where $M_i(v)$ is the set of the labels of nodes
$u\in M(m_l,\ldots,m_i)$ for which there is a selection
$\sigma$ with $\beta(v,\sigma,\delta(v,\sigma))=i$.
For each such $\sigma$, we put $L(v,\sigma)=\delta(v,\sigma)$; 
for all other $\sigma\in \Sigma$, we put $L(v,\sigma)=\emptyset$.
As $L\subseteq \delta$, as $T(L(v))\cap\sem{l(v)}_1\neq\emptyset$ for
states $v\in V$, as for pre-states
$u\in U$, there is exactly one $b\in\{0,1\}$ such that
$L(u,(\psi_u,b))=\{v\}$ and for all other $\sigma\in\Sigma$,
$L(u,\sigma)=\emptyset$, and as there is -- by guardedness of fixpoint-variables --
no $L$-cycle in $U$, $(W,L)$ is a pre-semi-tableau. Since we
constructed $L$ in such a way that nested timeouts are respected
and since nested and tracking timeouts both are defined by means of $\beta$,
$(W,L)$ has tracking timeouts too.

The proof for the converse direction is analogous: 
Let $(W,L)$ be a pre-semi-tableau with tracking timeouts, and
with set $V$ of states and set $U$ of
pre-states. To show that $W$ is contained in the fixpoint 
$\mathbf{E}$, we proceed by nested induction and coinduction
using tracking timeouts as termination measure for the induction parts of
the proof. For pre-states $u\in U$ with minimal
tracking timeouts $\overline{m}_u=(m_l,\ldots,m_j)$, there is a single $b$
such that $L(u,(\psi_u,b))=\delta(u,\psi_u,b)=:v$ where 
$v$ has tracking timeouts $\overline{m}_v$ that are obtained
from $\overline{m}_u$ by decreasing the $i$-th element by 1 if
$\beta(u,(\psi,b),\delta(u,(\psi,b)))=i$ for $i$ odd; thus
$u$ is contained in $f(M(m_l,\ldots,m_1),\ldots,M(m_l))$, as required.
For states $v\in V$ with minimal
tracking timeouts $\overline{m}_v=(m_l,\ldots,m_j)$, we have $T(L(v))\cap\sem{l(v)}_1\neq\emptyset$ where, again, the timeouts $\overline{m}_u$
for nodes $u\in L(v)$ are determined by $\overline{m}_v$ and $\beta(v,\sigma,\delta(v,\sigma))$, where $\sigma$ is such that $\delta(v,\sigma)=u$.
Again we have $v\in f(M(m_l,\ldots,m_1),\ldots,M(m,l))$, as required.
\qed\medskip\medskip

\noindent \textit{Full proof of Lemma~\ref{lem:truth}:}
Let $(V,\xi)$ be a strongly coherent coalgebra that is built over a pre-semi tableau
$(W,L)$ with unfolding timeouts and set of states $V$ and let $v\in\psem{\psi}$.
We proceed by induction over $\psi$.  The interesting cases are the cases with $\psi=\eta X.\psi'$ 
for some fixpoint variable $X\in\mathbf{V}$ and formula $\psi'$. In this case we start
a second induction over the number $o$ of closed fixpoint operators that are a subformula of $\psi$.
If $o=1$, then $\psi'$ contains no further closed fixpoint operators. If $o>1$, then we have, for
any fixpoint formula $\eta' Y.\phi$ that is a subformula of $\psi'$, that
$\psem{\eta' Y.\phi}\subseteq \sem{\eta' Y.\phi}$ by the induction
hypothesis. In both cases we are done if we reach a closed formula in the proof by nested induction
and coinduction below.
Since $(W,L)$ has unfolding timeouts, there is some vector $\overline{m}'$ such that
$v\in\psem{\psi}_{\overline{m}'}$.  
It thus suffices to show that for all timeout vectors $\overline{m}$ and formulas
$\psi\in\FLtarget$, we have $\psem{\psi}_{\overline{m}}\subseteq \sem{\psi}_{\overline{m}}$. 
We show this by nested induction and coinduction, using $(\overline{m},|\psi|)$ as
termination measure and distinguishing upon the shape of $\psi$. We consider just
the two interesting cases where $\psi=\hearts\psi'$ and where $\psi=\mu X.\psi'$. The former case is directly
finished by strong coherence of $(V,\xi)$. In the latter case we use the fact that
the unfolding of least fixpoint formulas reduces unfolding timeouts so that we have
$v\in\psem{\psi'[X\mapsto \psi]}_{\overline{m}'}$ where $\overline{m}'$ is obtained
from $\overline{m}$ by reducing $m_{\mathsf{ad}(\psi)}$ by 1 and leaving all other
timeouts unchanged so that we have $\overline{m'}<\overline{m}$
and the induction hypothesis finishes the case. Thus we have
shown that for all timeout vectors $\overline{m}$ and formulas
$\psi\in\FLtarget$, $\psem{\psi}_{\overline{m}}\subseteq \sem{\psi}_{\overline{m}}$ which in particular implies
 $v\in\sem{\psi}_{\overline{m}'}$, as required. 
\qed\medskip\medskip

\begin{lemma}\label{lemm:transf}
Let $G$ be a finite set, let $f:\Pow(G)\to\Pow(G)$ be a monotone
function and let $n$ be a number such that $\mu f=f^n(\emptyset)$.
Then we have that for all ordinal numbers $\lambda\geq n$,
\begin{align*}
\mu f = f^\lambda(\emptyset).
\end{align*}
\end{lemma}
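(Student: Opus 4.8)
The plan is to establish, by transfinite induction on $\lambda \geq n$, the stronger statement $f^\lambda(\emptyset) = f^n(\emptyset)$; since $f^n(\emptyset) = \mu f$ is given, this is exactly the claim. The first observation I would make is that the iteration has genuinely stabilized at stage $n$: because $\mu f$ is a fixpoint of $f$ and $\mu f = f^n(\emptyset)$, we get $f^{n+1}(\emptyset) = f(f^n(\emptyset)) = f(\mu f) = \mu f = f^n(\emptyset)$.

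Before the transfinite induction I would record the elementary fact that the finite iteration chain is increasing, i.e.\ $f^0(\emptyset) \subseteq f^1(\emptyset) \subseteq \cdots \subseteq f^n(\emptyset)$. This follows by an easy finite induction: $f^0(\emptyset) = \emptyset \subseteq f(\emptyset) = f^1(\emptyset)$, and if $f^j(\emptyset) \subseteq f^{j+1}(\emptyset)$ then monotonicity of $f$ yields $f^{j+1}(\emptyset) = f(f^j(\emptyset)) \subseteq f(f^{j+1}(\emptyset)) = f^{j+2}(\emptyset)$. In particular $f^k(\emptyset) \subseteq f^n(\emptyset) = \mu f$ for every natural number $k \leq n$.

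With these in hand, the transfinite induction splits into the usual three cases. The base case $\lambda = n$ is immediate. For a successor $\lambda = \lambda' + 1 \geq n$, we have $\lambda' \geq n$, so the induction hypothesis gives $f^{\lambda'}(\emptyset) = \mu f$, whence $f^\lambda(\emptyset) = f(f^{\lambda'}(\emptyset)) = f(\mu f) = \mu f$ since $\mu f$ is a fixpoint. For a limit ordinal $\lambda$ (necessarily $\lambda \geq \omega > n$) we have $f^\lambda(\emptyset) = \bigcup_{k < \lambda} f^k(\emptyset)$; the inclusion $\mu f = f^n(\emptyset) \subseteq f^\lambda(\emptyset)$ holds because $n < \lambda$ makes $f^n(\emptyset)$ one of the terms, and the reverse inclusion follows term by term, using the induction hypothesis $f^k(\emptyset) = \mu f$ for $k \geq n$ and the finite chain monotonicity $f^k(\emptyset) \subseteq \mu f$ for $k < n$.

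I expect the only point needing any care to be the limit case, where one must see that every earlier iterate lies below $\mu f$; this is precisely where the two preparatory observations are combined — the induction hypothesis covering indices at least $n$, and the elementary monotonicity of the chain covering the finitely many indices below $n$. No genuinely difficult step arises: the finiteness of the stabilization stage $n$ makes the argument entirely routine, in contrast to the general situation where the closure stage could itself be a large ordinal.
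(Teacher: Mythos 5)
Your proof is correct and follows essentially the same route as the paper's: transfinite induction on $\lambda$ with the usual successor/limit case split, using that $\mu f$ is a fixpoint in the successor step. If anything, your version is slightly more careful than the paper's, which in the limit case asserts $f^k(\emptyset)=\mu f$ for \emph{all} $k<\lambda$ (false for $k<n$, e.g.\ $k=0$), whereas you correctly split into $k\geq n$ (covered by the induction hypothesis) and $k<n$ (covered by the monotone chain observation $f^k(\emptyset)\subseteq f^n(\emptyset)=\mu f$).
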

\begin{proof}
The proof is by transfinite induction over $\lambda$. 
If $\lambda=0$, then $n=0$ and
$\mu f=f^0(\emptyset)=\emptyset$ so that we are done.
If $\lambda=\lambda'+1$, then we have $f^{\lambda'+1}(\emptyset)=f(f^{\lambda'}(\emptyset))=
f(\mu f)=\mu f$, where the second equality is by the induction 
hypothesis and the third equality holds since $\mu f$ is a fixpoint.
If $\lambda$ is a limit-ordinal, then we have $f^\lambda(\emptyset)=
\bigcup_{k< \lambda}f^k(\emptyset)$. By the induction hypothesis,
we have $f^k(\emptyset)=\mu f$ for all $k<\lambda$ so that we are done.\qed
\end{proof}

\noindent \textit{Full proof of Lemma~\ref{lem:modeltotab}:}
Let $(C,\xi)$ be a coalgebra and let $x\in C$ be a state with $x\models\target$.
We put $M=\{v\in \detcarrier\mid \exists y\in C.\, y\models l(v)\}$,
$V=M\cap\mathsf{states}$, $U=M\cap\mathsf{prestates}$
and define a pre-semi-tableau over $M$ in a timeout-respecting manner:
Let $u\in U$ and $y\in C$ with $y\models u$.
Also let $\overline{m}$ be the least (by lexicographic ordering) 
vector of ordinal numbers such that
$y\in\sem{\psi_u}_i^{\overline{m}}$.
If $\psi_u$ is a disjunction $\psi_1\vee\psi_2$, then we choose $b$ such that
$y\in\sem{\psi_b}_i^{\overline{m}}$. Otherwise, we put $b=0$.
Then we put $L(u,(\psi_u,b))=\delta(u,(\psi_u,b))$
and $L(u,\sigma)=\emptyset$ for all other $\sigma\in\Sigma$.
For $v\in V$ and $y\in C$ with $y\models v$,
we have $\xi(x)\in\bigcap_{\hearts\psi\in l(v)}\sem{\hearts}\sem{\psi}$.
For each $\hearts\psi\in l(v)$, there even is a least vector $\overline{m}_{\hearts\psi}$ of ordinal numbers such that
$\xi(x)\in\sem{\hearts}(\sem{\psi}_i^{\overline{m}_{\hearts\psi}})$.
For each $\sigma\in\mathsf{selections}$, we define $L(v,\sigma)=\delta(v,\sigma)$. For all other $\sigma\in\Sigma$, we put $L(v,\sigma)=\emptyset$.
Then we have $L\subseteq \delta$, $T(L(v))\cap\sem{l(v)}_1\neq\emptyset$ for
states $v\in V$, for pre-states
$u\in U$, there is exactly one $b\in\{0,1\}$ such that
$L(u,(\psi_u,b))=\{v\}$ and for all other $\sigma\in\Sigma$,
$L(u,\sigma)=\emptyset$, and there is -- by guardedness of fixpoint-variables --
no $L$-cycle in $U$; thus $(M,L)$ is a pre-semi-tableau.

It remains to show that $(M,L)$ has unfolding timeouts. We let $v\in M$,
$\psi\in l(v)$ and $y\in C$ with $y\models l(v)$ where $y\in\sem{\psi}_i^{\overline{m}}$ and proceed by nested transfinite induction and coinduction
over $\overline{m}$. We have to show that if a least fixpoint formula that
is a subformula of $\psi$
is satisfied at $y$ in the model with unfolding timeout $\lambda$, then this
fixpoint is unfolded in $(M,L)$ at most finitely often 
before being satisfied when starting from $v$.
The latter can be shown by proving containment of $v$ in the least fixpoint
of a suitable function $h$, i.e. in $\mu h=h^n(\emptyset)$ for
some finite number $n$. By Lemma~\ref{lemm:transf}, we have 
$h^n(\emptyset)=h^\lambda(\emptyset)$. Thus we proceed by transfinite induction over $\overline{m}=(m_k,\ldots,m_1)$ and distinguish upon the shape of $\psi$. 
The interesting case is the case with $\psi=\mu X.\psi'$ for some
fixpoint variable $X\in\mathbf{V}$ and formula $\psi'$ with $j:=\mathsf{ad}(\psi')$ even. If
$m_{j-1}$ is $0$, then we have $\sem{\mu X.\psi'}_i^0=\emptyset$ so that
there is nothing to show. If $m_{j-1}$ is $\lambda'+1$ for some ordinal
number $\lambda'$, 
then the induction hypothesis finishes the proof, using the fact 
that $\sem{\mu X.\psi}_i^{\lambda'+1}=\sem{\psi[X\mapsto \mu X.\psi]}_i^{\lambda'}$
and that unfolding timeouts in the pre-semi-tableau $(M,L)$ count unfoldings in the overlaying model. If $m_{j-1}$ is a limit ordinal, then we have 
$y\in \sem{\mu X.\psi}^{m_{j-1}}(\emptyset)=
\bigcup_{k<m_{j-1}}\sem{\mu X.\psi}^k(\emptyset)$. 
By the induction hypothesis, $y\in \sem{\mu X.\psi}^{k}(\emptyset)$ implies $v\in h^k(\emptyset)$ for all $k<m_{j-1}$
so that we have $v\in\bigcup_{k<m_{j-1}}h^k(\emptyset)=h^{m_{j-1}}(\emptyset)$, as required.\qed





\end{document}